\providecommand{\algorithmname}{Algorithm Class}
\theoremstyle{plain}
\newtheorem{thm}{\protect\theoremname}
  \theoremstyle{plain}
  \newtheorem{cor}[thm]{\protect\corollaryname}
\providecommand{\corollaryname}{Corollary}
\providecommand{\theoremname}{Theorem}
\begin{document}
\global\long\def\ket#1{\left| #1 \right\rangle }

\global\long\def\bra#1{\left\langle #1 \right|}

\global\long\def\braket#1#2{\left\langle #1 | #2 \right\rangle }

\global\long\def\ketbra#1#2{|#1\rangle\!\langle#2|}

\global\long\def\braopket#1#2#3{\bra{#1}#2\ket{#3}}

\global\long\def\Tr{\text{Tr}}

\global\long\def\Pr#1{\text{Pr}(#1) }




\title{Negative Quasi-Probability as a Resource for Quantum Computation}

\author{Victor Veitch}

\author{Christopher Ferrie}
\affiliation{Institute for Quantum Computing and Department of Applied Mathematics,
University of Waterloo, Waterloo, Ontario, Canada, N2L 3G1}

\author{David \surname{Gross}}
\affiliation{Institute for Physics, University of Freiburg, Rheinstrasse 10, D-79104 Freiburg, Germany}

\author{Joseph Emerson}
\affiliation{Institute for Quantum Computing and Department of Applied Mathematics,
University of Waterloo, Waterloo, Ontario, Canada, N2L 3G1}

\begin{abstract}
A central problem in quantum information is to determine the minimal
physical resources that are required for quantum computational speedup
and, in particular, for fault-tolerant quantum computation. We establish
a remarkable connection between the potential for quantum speed-up
and the onset of negative values in a distinguished quasi-probability
representation, a discrete analog of the Wigner function for quantum
systems of odd dimension. This connection allows us to resolve an
open question on the existence of bound states for magic-state distillation:
we prove that there exist mixed states outside the convex hull of
stabilizer states that cannot be distilled to non-stabilizer target
states using stabilizer operations. We also provide an efficient simulation
protocol for Clifford circuits that extends to a large class of mixed
states, including bound universal states.
\end{abstract}
\maketitle
While it is widely believed that quantum computers can solve certain
problems with exponentially fewer resources than their classical counterparts,
the scope of the physical resources of the underlying quantum systems
that enable universal quantum computation is not well understood.
For example, for the standard circuit model of quantum computation,
Vidal has shown that high-entanglement is necessary for an exponential
speed-up \cite{Vidal2003Efficient}; however, it is also known that access to high-entanglement
is not sufficient \cite{Gottesman1997Stabilizer}. Moreover, in alternative
models of quantum computation such as DQC1 \cite{Knill1998Power}, algorithms
that may be performed on highly-mixed input states appear to be more
powerful than classical computation even though there appears to be
a negligible amount of entanglement in the underlying quantum system
\cite{Datta2005Entanglement}. This suggests that large amounts of entanglement, purity
or even coherence may not be necessary resources for quantum-computational
speed-up. One of the central open problems of quantum information
is to understand which sets of quantum resources are jointly necessary
and sufficient to enable an exponential speed-up over classical computation.
Any solution to this important problem may point to more practical
experimental means of achieving the benefits of quantum computation.

The question of whether a restricted subset of quantum theory is still sufficient for a given task 
is meaningful when there is a specific context that divides the full set of possible quantum operations
into two classes: the restricted subset of operations that are accessible or easy to implement and the remainder that
are not. In such a context it is then natural to consider the difficult operations
as resources and ask how much, if any, of these resources are required. For example, a common paradigm in quantum communication
is that of two or more spatially separated parties for which local quantum operations and classical communication
define a restricted set of operations that are accesible or ``free resources'', whereas joint quantum operations are not free; in this context entanglement is the natural
resource for quantum communication. Here we are interested not in
quantum communication but in the power of quantum computation, and in particular
the practically relevant case of fault-tolerant quantum computation.  Transversal unitary gates,
i.e., gates that do not spread errors within each code block, play
a critical role  in fault-tolerant quantum computation.
Recent theoretical work has shown that  a set of quantum
gates which is both universal and transversal, and hence fault-tolerant
\cite{Eastin2009Restrictions,Zeng2007Transversality,Chen2008Subsystem}, does not exist. That is, any scheme for fault
tolerant quantum computation divides quantum operations into two classes:
those with a fault-tolerant implementation -- these are the ``free resources'' --  and the remainder -- these are not free but are required to achive universality.  For a fixed fault tolerant scheme the critical question is:
what are necessary and sufficient physical resources to promote
fault-tolerant computation to universal quantum computation?

Most of the best known fault tolerant schemes are built around the
well-known stabilizer formalism \cite{Gottesman2006Quantum}, in which a distinguished
set of preparations, measurements, and unitary transformations (the
{}``stabilizer operations'') have a fault tolerant implementation.
Stabilizer operations also arise naturally in some physical systems with topological
order \cite{Moore1991Nonabelions, Lloyd2002Quantum, Doucot2002Pairing}. As described above,
the transversal set of stabilizer operations do not give a universal
gate set and must be supplemented with some additional (non-stabilizer)
resource.  A celebrated scheme for overcoming this limitation is the
magic state model of quantum computation devised by Bravyi and Kitaev
\cite{Bravyi2004Universal} where the additional resource is a set of ancilla
systems prepared in a some (generally noisy) non-stabilizer quantum
state. Hence, in this important paradigm, the question of which physical resources are
required for universal fault-tolerant quantum computation reduces to the following: which non-stabilizer
states are necessary and sufficient to promote stabilizer computation
to universal quantum computation? 

In this paper we identify a non-trivial closed, convex subset of the space of
quantum states which we prove is incapable of producing universal fault-tolerant quantum computation. In particular,
we show that this convex subset strictly contains the convex hull
of stabilizer states, and thereby prove that there exists a class
of \emph{bound universal states}, i.e. states that can not be prepared
from convex combinations of stabilizer states and yet are not useful
for quantum computation. Thus our proof of the existence of  bound universal
states resolves in the negative the open problem raised
by Bravyi and Kitaev  \cite{Bravyi2004Universal}
of whether \emph{all} non-stabilizer states promote stabilizer computation
to universal quantum computation. Furthermore, we devise an efficient simulation algorithm
for the subset of quantum theory that consists of operations from
the stabilizer formalism acting on inputs from our non-universal region,
which includes mixed states both inside and outside the convex hull
of stabilizer states. This simulation scheme is an extension of the
celebrated Gottesman-Knill theorem \cite{Gottesman1997Stabilizer,Aaronson2004Improved} to a broader class of input state and should be of independent interest. 

Our theoretical method for proving these results is to construct a classical, local
hidden variable model for the subtheory of quantum theory that consists of the stabilizer formalism and then determine the scope of additional quantum resources that are also described by this model. 
Indeed our local hidden variable model is a distinguished quasi-probability representation with non-negative elements. 
For a $d$ dimensional quantum system there are many possible ways to represent
arbitrary quantum states as quasi-probability distributions over a
phase space of $d^{2}$ points and projective measurements as conditional
quasi-probability distributions over the same space (see \cite{Ferrie2009Framed,Ferrie2010Necessity} for further details). Perhaps unsurprisingly, it has been shown that the full quantum theory
can not be represented with non-negative elements in any such representation  \cite{Ferrie2008Frame,Spekkens08,Ferrie2009Framed,Ferrie2010Necessity}. However, one might expect
that a subtheory of quantum theory that is inadequate for quantum speed-up might be represented non-negatively, i.e. as a true
probability theory, in some natural choice of quasi-probability representation. For the context described above, we seek a quasi-probability representation reflecting our natural operational restriction, in particular, we require that stabilizer states and projective
measurements onto stabilizer states have non-negative representation
and that unitary stabilizer operations (i.e., Clifford transformations) correspond to stochastic processes.
Conveniently, for
quantum systems with odd Hilbert space dimension such a representation
is already known to exist: this is the discrete Wigner function picked
out by Gross \cite{Gross2006Hudsons,Gross2007Nonnegative} from the broad class defined
by Gibbons \emph{et al} \cite{Gibbons2004Discrete}.  In such a representation it is natural to examine
whether the resouces that are necessary or sufficient for quantum speed-up correspond to those that are not represented by non-negative elements of the representation. 

With this insight in hand the results of this paper may now be stated
more carefully:

\textbf{Classically efficient simulation of positive Wigner functions:} The set of fault tolerant quantum logic gates in the stabilizer formalism
are known as the \emph{Clifford} gates. Our first contribution is
an explicit simulation protocol for quantum circuits composed of Clifford
gates acting on input states with positive discrete Wigner representation.
We also allow arbitrary product measurements with positive discrete
Wigner representation. This simulation is efficient (linear) in the
number of input registers to the quantum circuit. This simulation
scheme is an extension of the celebrated Gottesman-Knill theorem
and should be of independent interest.

\textbf{Negativity is necessary for magic state distillation:} This simulation protocol implies that states outside the stabilizer
formalism with positive discrete Wigner function (bound universal
states) are not useful for magic state distillation. The second contribution
of this paper is to give a direct proof of this fact exploiting only
the observation that negative discrete Wigner representation can not
be created by stabilizer operations. This proof has a more general range of applicability than the efficient simulation scheme and also makes clear
the conceptual importance of negative quasi-probability as a \emph{resource}
for stabilizer computation.

\textbf{Geometry of positive Wigner functions:} The set of quantum states with positive discrete Wigner function strictly
contains the set of (convex combinations of) stabilizer states. To
prove this fact we determine the geometry of the region of quantum
state space with positive discrete Wigner representation. Concretely,
we show that the facets of the classical probability simplex defining
the discrete Wigner function are also facets of the polytope with
the (pure) stabilizer states as its vertices. Since there are many
more facets of the stabilizer polytope than of the simplex this suffices
to show the existence of non-stabilizer states with positive representation. 

The paper concludes with a discussion of this work and some avenues
for future exploration.

\subsection*{Previous Work}

The Gottesman-Knill theorem provides an efficient classical simulation
protocol for circuits of Clifford unitaries acting on stabilizer states.
This result deals with pure qubit stabilizer state inputs and simulates
the evolution of the full quantum state. The simulation scheme of
the present paper deals with odd dimensional systems, makes no distinction
between mixed state and pure state input, and allows the simulation
of a large class of non-stabilizer states. However, our scheme constructs
a classical circuit with the same outcome probabilities as the quantum
circuit and does not recover the evolution of the full quantum state.

A number of papers have addressed the question of which ancilla states
enable universal quantum computation for the magic state model in
qubit systems \cite{Campbell_bound_magic_states, Campbell_magic_state_catalysis,campbell_struct_of_dist_protocols,Reichardt2005Quantum,Reichardt2009ErrorDetectionBased,Reichardt2009Quantum,ratanje_gen_entanglement,vanDam2010Noise_for_higher_dim_sys}.
The most directly comparable result is the demonstration by Campbell
and Browne \cite{Campbell_bound_magic_states} that for any protocol
on the input $\rho^{\otimes n}$ there exists a $\rho$ outside the
convex hull of stabilizer states that maps to a convex combination
of stabilizers. As $n$ grows these states are known to exist only
within some arbitrarily small distance $\epsilon$ of the convex hull
of stabilizer states. By contrast, the present result implies the
existence of states a fixed distance from the hull which are not distillable
by any protocol.

The present result is complementary to previous work connecting negativity
in discrete Wigner function type representations to quantum computational
speedup \cite{Cormick2006Classicality,Galvao2005Discrete,vanDam2010Noise_for_higher_dim_sys}.
In particular, van Dam and Howard \cite{vanDam_stab_noise_threshold}
have used techniques of this type to derive a bound on the amount
of depolarizing noise a state can withstand before entering the stabilizer
polytope. Their work deals only with prime dimensional systems, and
in this case it turns out that the noise threshold they derive is
the same as the amount of noise required for their {}``maximally
robust'' state to enter the region of positive states.

\section{Background}

\subsection{The Stabilizer Formalism}

Known schemes for fault tolerant quantum computation allow for only
a limited set of operations to be implemented directly on the encoded
quantum information. For most known fault tolerance schemes this restricted
set is the stabilizer operations consisting of preparation and measurement
in the computational basis and a restricted set of unitary operations.
This restricted set is the Clifford group, and we now review the important
parts of its structure for qudit systems \cite{Gross2007Evenly}.
The primitive object about which the the stabilizer formalism is built
is the Heisenberg-Weyl group, an extension of qubit Pauli group to
odd dimensional systems. We will define the Heisenberg-Weyl group
in terms of generalized $X$ and $Z$ operators. In odd \emph{prime}
dimension $d$ these are given by their action on computational basis
states, 
\begin{eqnarray*}
X\ket x & = & \ket{x+1\text{ mod }d}\\
Z\ket x & = & \omega^{x}\ket x,
\end{eqnarray*}
 where $\omega=\exp\left(\frac{2\pi i}{d}\right)$ is a $d$th primitive
root of unity. The Heisenberg-Weyl operators are the $d^{2}$ operators
generated by $X$ and $Z$, which have a group structure if we include
phases. A general element of the group is given as 
\[
T_{(a_{1},a_{2})}=\omega^{-\frac{a_{1}a_{2}}{2}}Z^{a_{1}}X^{a_{2}}\ (a_{1},a_{2})\in\mathbb{Z}_{d}\times\mathbb{Z}_{d}.
\]
$\mathbb{Z}_{d}$ is the finite field of $d$ elements, and the field
$\mathbb{Z}_{d}\times\mathbb{Z}_{d}$ will form the {}``phase space''
underpinning the discrete Wigner function defined below.

This definition applies only for prime dimension, but it is easily
promoted to arbitrary odd dimension. In this case the Heisenberg-Weyl
operators are defined to be tensor products of the Heisenberg-Weyl
operators of the factor spaces. For a system with composite Hilbert
space $H_{\boldsymbol{a}}\otimes H_{\boldsymbol{b}}\otimes\dots\otimes H_{\boldsymbol{u}}$
the Heisenberg-Weyl operators may be written as: 
\begin{align*}
T_{(a_{1},a_{2})\oplus(b_{1},b_{2})\dots\oplus(u_{1},u_{2})} & \equiv T_{(a_{1},a_{2})}\otimes T_{(b_{1},b_{2})}\dots\otimes T_{(u_{1},u_{2})}.
\end{align*}
 If the dimension of $H_{j}$ is $d_{j}$ the vector $(j_{1},j_{2})$
is an element of $\mathbb{Z}_{d_{j}}\times\mathbb{Z}_{d_{j}}$ and
the vector $(a_{1},a_{2})\oplus(b_{1},b_{2})\dots\oplus(u_{1},u_{2})$
is an element of $\left(\mathbb{Z}_{d_{a}}\times\mathbb{Z}_{d_{a}}\right)\times\left(\mathbb{Z}_{d_{b}}\times\mathbb{Z}_{d_{b}}\right)\dots\times\left(\mathbb{Z}_{d_{u}}\times\mathbb{Z}_{d_{u}}\right)$.
We denote the Heisenberg-Weyl group for dimension $d$ as $\mathcal{W}_{d}$.

The Clifford operators, $\mathcal{C}_{d}$, are the set of unitary
operators that map Heisenberg-Weyl operators to Heisenberg-Weyl operators
under conjugation: 
\[
U\in\mathcal{C}_{d}\iff UT_{\boldsymbol{u}}U^{\dagger}\in\mathcal{W}_{d}\ \forall T_{\boldsymbol{u}}\in\mathcal{W}_{d}.
\]
 This is the normalizer of the Heisenberg-Weyl group. This group has
many interesting features. For instance, operations of the Clifford
group on computational basis input states can be efficiently simulated
even though they create large amounts of entanglement.

For this paper it will be necessary to understand the Clifford group
in terms of its representation over finite fields. Because it enormously
simplifies the presentation we will restrict ourselves to working
in the case where the system is composed of $n$ subsystems with a
common Hilbert space dimension $p$ a prime. However, many of the
important results carry over for arbitrary odd dimensional systems \cite{Gross2006Hudsons}.
The main result that we want is that a Clifford operation $U_{\boldsymbol{F},\boldsymbol{a}}\in\mathcal{C}_{p^{n}}$
can be specified as, 
\[
U_{\boldsymbol{F},\boldsymbol{a}}T_{\boldsymbol{u}}U_{\boldsymbol{F},\boldsymbol{a}}^{\dagger}=T_{\boldsymbol{Fu}+\boldsymbol{a}}.
\]
 That is, any Clifford unitary is uniquely specified by its action
on the Pauli group and this is given by induced action on the label
$\boldsymbol{u}\in\left(\mathbb{Z}_{p}\times\mathbb{Z}_{p}\right)^{n}$.
The matrix $\boldsymbol{F}$ is a $2n\times2n$ symplectic matrix
with entries in $\mathbb{Z}_{p}$ and $\boldsymbol{a}\in\left(\mathbb{Z}_{p}\times\mathbb{Z}_{p}\right)^{n}$.
Clifford operations generally factor as 
\[
U_{\boldsymbol{F},\boldsymbol{a}}=U_{\boldsymbol{F}}T_{\boldsymbol{a}},
\]
where $\boldsymbol{F}\in\text{Sp}(2n,\mathbb{Z}_{p})$, the group
of symplectic matrices of size $2n$ with entries in $\mathbb{Z}_{p}$.
A matrix is symplectic if it preserves the symplectic product, which
may be defined in a natural way on finite fields. This structure is
not important for the present paper so we do not cover it in detail.
What is important is that this representation of the Clifford group
has a size linear in $n$ and thus can be easily tracked by a classical
computer; this is at the heart of simulation results about the Clifford
group.

Finally, we define stabilizer states to be any state that can be prepared
by applying a Clifford unitary to a computational basis state. These
states are important because they are the only pure states that can
be prepared using our restricted operation set.

\subsection{Magic State Distillation}

It is possible to implement stabilizer operations fault tolerantly,
but these operations do not suffice for universal quantum computation.
To promote stabilizer computation to universal quantum computation
some additional resource is required. This additional resource will
be subject to large amounts of noise, so the question becomes: which
non-stabilizer resources can be used to promote stabilizer computation
and how can this be done? The first of these questions is the subject
of this paper. The second question finds a particularly elegant solution
in the form of magic state distillation \cite{Bravyi2004Universal,Anwar2012Qutrit,Campbell2012Magic}.

Magic state distillation protocols aim to consume a large number of
copies of a non-stabilizer qudit input state $\rho_{\text{in}}$ to
produce a single non-stabilizer qudit output state $\rho_{\text{out}}$
with higher fidelity to some non-stabilizer pure state. This output
state is then consumed to implement some non-Clifford unitary gate.
These protocols have the following structure: 
\begin{itemize}
\item Prepare a number of copies of the input state $\rho_{\text{in}}^{\otimes n}$. 
\item Perform some Clifford operation on $\rho_{\text{in}}^{\otimes n}$. 
\item Measure Heisenberg-Weyl observables on the last $n-1$ registers and
post select on the outcome. 
\end{itemize}
When these protocols succeed the first register will be the output
state $\rho_{\text{out}}$. Typically these protocols work iteratively,
repeatedly consuming $\rho_{\text{in}}^{\otimes n}$ until $n$ copies
of $\rho_{\text{out}}$ have been produced and then using $\rho_{\text{out}}^{\otimes n}$
as the input to the protocol to produce $\rho_{\text{out}}^{'}$ and
so on. The protocols we deal with here encompass but do not require
this iterative structure.

It is not clear if for all input states $\rho_{\text{in}}$ there
exists a protocol to produce an output state with arbitrarily high
fidelity to some non-stabilizer input state. We call states for which
such a protocol exists {}``distillable'', and one of our results
is to show that not all non-stabilizer states are distillable.

\subsection{Quasi-Probability Representations}

There is a long history of studying negativity in quasi-probability
representations.  The most notable examples come from quantum optics where the Wigner function \cite{Kenfack2004Negativity} and the $Q$ and $P$ functions \cite{Mandel1986NonClassical} play prominent roles.
However, such approaches typically suffer in significance due to the
problem of non-uniqueness of the choice of representation.  While a
quantum state may correspond to a negative-valued quasi-probability
function in one choice of quasi-probability representation, in another
choice that same state can be positive, and hence a valid classical
probability density. In Reference \cite{Ferrie2010Necessity},
two of us proved that for any choice of quasi-probability representation
in which both quantum states and measurements are represented, at
least some of the states and measurements must take on negative-values.
However, this result still leaves open the possibility that certain
subsets of quantum states and measurements may be represented positively leading to a classical
probability model for the corresponding subset of quantum operations \cite{Wallman2012Nonnegative}.
When the restricted subtheory is prescribed by an operational restriction, this question takes on a precise and relevant meaning.  Indeed, such an approach has been considered already by Schack and
Caves, who constructed classical probability models for few-qubit
NMR experiments \cite{Schack1999Classical} and thereby stimulated an important
discussion of what kinds of resources might be required for universal
quantum computation.

Our approach is to exploit the freedom in choice of quasi-probability representations \cite{Ferrie2011Quasiprobability} in order to
align the positive subtheory with the operational restriction defining
the error-free resources in the magic-state model. We seek a quasi-probability
representation for which stabilizer states and projective measurements
onto stabilizer states have positive representation and for which
stabilizer transformations correspond to stochastic processes. This
first step is easy given that such a representation already exists;
this distinguished representation is the discrete Wigner function picked out by Gross \cite{Gross2006Hudsons,Gross2007Nonnegative}
from the broad class defined by Gibbons \emph{et al} \cite{Gibbons2004Discrete}.

\subsection{The discrete Wigner Function}

Our approach is to represent Clifford operations as stochastic processes
over a discrete phase space. Intuitively, if the dynamics of the quantum
system admit a representation as a classical statistical process then
it should not be sufficient for universal quantum computation. To
that end, we look for a quasi-probability representation for quantum
theory where stabilizer resources are represented positively. This
is the discrete Wigner function.

The discrete Wigner representation of a state $\rho\in L(\mathbb{C}^{p^{n}})$
is a quasi-probability distribution over $\mathbb{Z}_{d}\times\mathbb{Z}_{d}$,
which can be thought of as $d$ by $d$ grid. This grid is the discrete
analogue of the phase space of classical mechanics. The map taking
quantum states to quasi-probability distributions on discrete phase
space is uniquely specified by a set of \emph{phase space point operators}
$\{A_{\boldsymbol{u}}\}$ (defined below). For each point $\boldsymbol{u}$
in the discrete phase space there is a corresponding operator $A_{\boldsymbol{u}}$
and the value of discrete Wigner representation of $\rho$ at this
point is given as, 
\[
W_{\rho}(\boldsymbol{u})=\frac{1}{d}\Tr(A_{\boldsymbol{u}}\rho).
\]
A quantum measurement with POVM $\{E_{k}\}$ is represented by assigning
conditional (quasi-)probability functions over the phase space to
each measurement outcome, 
\[
W_{E_{k}}(\boldsymbol{u})=\Tr(A_{\boldsymbol{u}}E_{k}).
\]
In the case where $W_{E_{k}}(\boldsymbol{u})\ge0\;\forall\boldsymbol{u}$,
this can be interpreted classically as the probability of getting
outcome $k$ given that the system is actually at point $\boldsymbol{u}$,
$W_{E_{k}}(\boldsymbol{u})=\text{Pr}(\text{outcome }k|\text{location }\boldsymbol{u})$.
If both $W_{\rho}(\boldsymbol{u})$ and $W_{E_{k}}(\boldsymbol{u})$
are positive then the law of total probability gives the probability
of getting outcome $k$ from a measurement of state $\rho$, 
\[
\text{Pr}(k)=\sum_{\boldsymbol{u}}W_{\rho}(\boldsymbol{u})W_{E_{k}}(\boldsymbol{u}).
\]
In fact this prediction reproduces the Born rule even when $W_{\rho}(\boldsymbol{u})$
or $W_{E_{k}}(\boldsymbol{u})$ take on negative values.

We say a state $\rho$ has positive representation if $W_{\rho}(\boldsymbol{u})\ge0\ \forall\boldsymbol{u}\in\mathbb{Z}_{d}^{n}\times\mathbb{Z}_{d}^{n}$
and negative representation otherwise. We will say a measurement with
POVM $M=\{E_{k}\}$ has positive representation if $W_{E_{k}}(\boldsymbol{u})\ge0\ \forall\boldsymbol{u}\in\mathbb{Z}_{d}^{n}\times\mathbb{Z}_{d}^{n},\ \forall E_{k}\in M$
and negative representation otherwise.

The phase space point operators are defined in terms of the Heisenberg-Weyl
operators as, 
\begin{eqnarray*}
A_{\boldsymbol{0}} & = & \sum_{\boldsymbol{u}}T_{\boldsymbol{u}},\ A_{\boldsymbol{u}}=T_{\boldsymbol{u}}A_{\boldsymbol{0}}T_{\boldsymbol{u}}^{\dagger}.
\end{eqnarray*}
 These operators are Hermitian so the discrete Wigner representation
is real-valued. There are $d^{2}$ such operators for $d$-dimensional
Hilbert space; they are informationally complete and orthogonal in
the sense that $\Tr(A_{\boldsymbol{u}}A_{\boldsymbol{v}})=d\delta(\boldsymbol{u},\boldsymbol{v}).$

These operators have several important features reflecting the salient
properties of the discrete Wigner representation\cite{Gross2006Hudsons,Gibbons2004Discrete}: 
\begin{enumerate}
\item (Discrete Hudson's theorem) if $\ket S$ is a stabilizer state then
$\Tr(A_{\boldsymbol{u}}\ketbra SS)\ge0\ \forall\boldsymbol{u}$, and
the stabilizer states are the only pure states satisfying this property.
That is, a pure state has positive representation if and only if it
is a stabilizer state. 
\item Clifford operators have the action $U_{\boldsymbol{F},\boldsymbol{a}}A_{\boldsymbol{u}}U_{\boldsymbol{F},\boldsymbol{a}}^{\dagger}=A_{\boldsymbol{Fu}+\boldsymbol{a}}.$
This means that 
\[
W_{U_{\boldsymbol{F},\boldsymbol{a}}\rho U_{\boldsymbol{F},\boldsymbol{a}}^{\dagger}}(v)=W_{\rho}(\boldsymbol{F}^{-1}\left(\boldsymbol{v}-\boldsymbol{a}\right))
\]
 so that Clifford transformations map to permutations of the underlying
phase space and, in particular, Clifford operations preserve positive
representation. 
\item For $\rho=\sum_{\boldsymbol{u}}p_{\boldsymbol{u}}A_{\boldsymbol{u}}$
and $\sigma=\sum_{\boldsymbol{u}}q_{\boldsymbol{u}}A_{\boldsymbol{u}}$
the trace inner product is $\Tr(\rho\sigma)=d\sum_{\boldsymbol{u}}p_{\boldsymbol{u}}q_{\boldsymbol{u}}$; 
\item The phase point operations in dimension $d^{n}$ are tensor products
of $n$ copies of the $d$ dimension phase space point operators. 
\end{enumerate}

\section{Negative discrete Wigner Representation is Necessary for Computational
Speedup}

We now establish that any quantum computation consisting of stabilizer
operations acting on product input states with positive representation
can not produce an exponential computational speed-up. To this end
we give an explicit efficient classical simulation protocol for such
circuits. Like the Gottesman-Knill protocol our scheme allows for
the simulation of pure state stabilizer inputs to circuits composed
of Clifford transformations and stabilizer measurements. However,
our simulation scheme extends the Gottesman-Knill result in several
ways. First, it applies to systems of qudits rather than qubits. Second,
it applies to mixed state inputs. Thirdly, and most remarkably, it
applies to some non-stabilizer resources - namely those with positive
discrete Wigner representation.

Any particular run of a quantum algorithm on $n$ registers will produce
a string $\boldsymbol{k}$ of $n$ measurement outcomes. These outcomes
occur at random and we assign the random variable $K_{\text{quant}}$
to be the algorithm output. The algorithm can then be considered as
a way of sampling outcomes according to the distribution $\Pr{K_{\text{quant}}=k}$.
To simulate a quantum algorithm it suffices to give a simulating algorithm
which samples from the distribution $\text{Pr}(K_{\text{quant}}=\boldsymbol{k}$),
which is what we do here. Notice that this form of simulation does
not allow us to actually infer the distribution of outcomes, but it
does suffice for many important tasks (for example, estimating the expected
outcome). 

The type of algorithms we treat here take the following form (see
Figure \ref{fig:Example-Circuit-1} for an example): 

\begin{algorithm}[H]
\begin{enumerate}
\item Prepare an initial $n$ qudit input state $\rho_{1}\otimes\dots\otimes\rho_{n}\in\rho\in L(\mathbb{C}^{p^{n}})$
where $\rho_{1},\dots,\rho_{n}$ have positive discrete Wigner representation. 
\item Until all registers have been measured:

\begin{enumerate}
\item Apply a Clifford unitary gate $U_{\boldsymbol{F}}$, labeled by the
symplectic transformation $\boldsymbol{F}\in\text{Sp}(2n,d)$. 
\item Measure the final qudit register using a measurement with positive
discrete Wigner representation. Record the outcome $k_{j}$ of measurement
the $j$th register. Further steps in the computation may be conditioned
on the outcome of this measurement. 
\end{enumerate}
\end{enumerate}
\caption{Family of Simulable Quantum Algorithms\label{alg:Simulable-Quantum-Circuit}\protect \\
Algorithms in this class sample strings of measurement outcomes $\boldsymbol{k}$
according to the distribution $\Pr{K_{\text{quant}}=\boldsymbol{k}}$
determined by the Born rule.}
\end{algorithm}

\begin{figure}
\includegraphics[width=7cm]{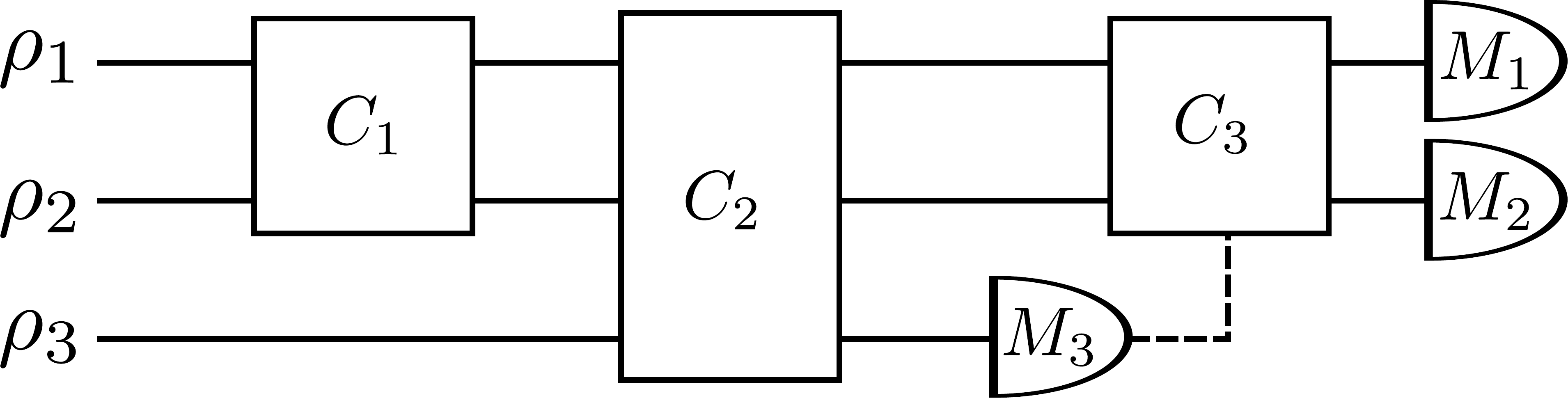}

\caption{Example of a stabilizer circuit:
$\rho_{i}$ have positive representation, $C_{i}$ are Clifford gates
and $M_{i}$ have positive representation.
The choice of gate $C_{3}$ can be conditioned on the outcome of measurement
$M_{3}$. \label{fig:Example-Circuit-1}}
\end{figure}

Notice that there is no loss of generality in considering only symplectic
Clifford transformations as the Heisenberg-Weyl component can be rolled
into the measurement.

The essential idea for the simulation is to take seriously the hidden
variable model the restrictions allow us. In the discrete Wigner picture
the system begins at point $\boldsymbol{u}$ in the discrete phase
space, which is unknown but definite and fixed. The effect of $U_{\boldsymbol{F}}$
is to move the system from the point $\boldsymbol{u}$ to the point
$\boldsymbol{F}\boldsymbol{u}$, and measurement amounts to checking
some region of the phase space to see if it contains the system. Since
the vector $\boldsymbol{u}$ and matrix $\boldsymbol{F}$ are size
$2n$ with entries from $\mathbb{Z}_{d}$ it is computationally efficient
to classically store and update the system's location. Of course,
a (positively represented) quantum state corresponds to a probability
density over the space so we must treat this a little more carefully.
The simulation protocol is: 

\begin{algorithm}[H]
\begin{enumerate}
\item Sample $\boldsymbol{u}\in\mathbb{F}_{d}^{2n}$ according to the distribution
$W_{\rho_{1}\otimes\dots\otimes\rho_{n}}(\boldsymbol{u})=W_{\rho_{1}}(\boldsymbol{u}_{1})W_{\rho_{2}}(\boldsymbol{u}_{2})\dots W_{\rho_{n}}(\boldsymbol{u}_{n}).$ 
\item Repeat until all registers have been measured:

\begin{enumerate}
\item If the unitary $U_{\boldsymbol{F}}$ is applied then update $\boldsymbol{u}\rightarrow\boldsymbol{Fu}$. 
\item If the measurement $M$ with corresponding POVM $\{E_{k}\}$ is made
on the last register of the quantum circuit then report outcome $k$
with probability $W_{E_{k}}(\boldsymbol{u}_{m})$ where $\boldsymbol{u}_{m}$
is the ontic position of the last qudit system, defined by $\boldsymbol{u}=\boldsymbol{u}_{1}\oplus\boldsymbol{u}_{2}\dots\oplus\boldsymbol{u}_{m}.$
If the quantum algorithm conditions further steps on the outcome of
measurement on this register then condition further steps of the simulation
on measurement outcome $k$.
\end{enumerate}
\end{enumerate}
\caption{Classical Simulation Algorithm\label{alg:Equivalent-Classical-Circuit}\protect \\
Algorithms in this class sample strings of measurement outcomes $\boldsymbol{k}$
according to the distribution $\Pr{K_{\text{class}}=\boldsymbol{k}}$}
\end{algorithm}

Our claim is that the classical algorithm in Algorithm Class \ref{alg:Equivalent-Classical-Circuit}
efficiently simulates the corresponding quantum algorithm in Algorithm Class \ref{alg:Simulable-Quantum-Circuit}.
More precisely,
\begin{thm}
An $n$ qudit quantum algorithm belonging to Algorithm Class \prettyref{alg:Simulable-Quantum-Circuit}
is simulable by the corresponding $2n$ dit classical algorithm in Algorithm Class \prettyref{alg:Equivalent-Classical-Circuit} in the sense that the
distribution of outcomes $\boldsymbol{k}$ is the same for both algorithms,
$\Pr{K_{\text{class}}=\boldsymbol{k}}=\Pr{K_{\text{quant}}=\boldsymbol{k}}$.\end{thm}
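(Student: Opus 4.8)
The plan is to prove by induction on the number of circuit steps that after each step, the classical simulator's probability distribution over the ``ontic'' state $\boldsymbol{u}$ (conditioned on the measurement outcomes recorded so far) matches the discrete Wigner representation of the quantum state (conditioned on the same outcomes), and that the probability of each outcome string recorded so far agrees between the two algorithms. The base case is step~1: the simulator samples $\boldsymbol{u}$ from $W_{\rho_1\otimes\dots\otimes\rho_n}(\boldsymbol{u}) = W_{\rho_1}(\boldsymbol{u}_1)\cdots W_{\rho_n}(\boldsymbol{u}_n)$, which is a genuine probability distribution precisely because each $\rho_i$ has positive representation (and because phase point operators in dimension $p^n$ factor as tensor products, property~4, so the Wigner function of the product state is the product of the Wigner functions, and it normalizes since $\Tr(A_{\boldsymbol 0}) = \sum_{\boldsymbol u}\Tr(T_{\boldsymbol u})/\!\ldots = d$ appropriately gives $\sum_{\boldsymbol u} W_\rho(\boldsymbol u) = \Tr(\rho) = 1$).

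For the inductive step there are two cases. For a Clifford unitary $U_{\boldsymbol F}$: the quantum state updates $\rho \to U_{\boldsymbol F}\rho U_{\boldsymbol F}^\dagger$, and by property~2 this sends $W_\rho(\boldsymbol u) \to W_\rho(\boldsymbol F^{-1}\boldsymbol u)$, i.e.\ the Wigner distribution is pushed forward by the permutation $\boldsymbol u \mapsto \boldsymbol F\boldsymbol u$; the simulator does exactly $\boldsymbol u \to \boldsymbol F\boldsymbol u$, so the distributions stay matched. For a measurement $M = \{E_k\}$ on the last register with positive representation: I would use property~3 (or directly the formula $\Pr(k) = \sum_{\boldsymbol u} W_\rho(\boldsymbol u) W_{E_k}(\boldsymbol u)$, which reproduces the Born rule) to compute, on the quantum side, that the outcome $k$ occurs with probability $\sum_{\boldsymbol u} W_\rho(\boldsymbol u) W_{E_k\otimes \mathbbm 1}(\boldsymbol u)$ and the post-measurement Wigner function, conditioned on outcome $k$, is the pointwise product $W_\rho(\boldsymbol u) W_{E_k\otimes\mathbbm 1}(\boldsymbol u)$ renormalized. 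Since $E_k \otimes \mathbbm 1$ has a Wigner function that factors as $W_{E_k}(\boldsymbol u_m)$ times the (constant, normalized) Wigner function of the identity on the other registers, and the simulator reports $k$ with probability $W_{E_k}(\boldsymbol u_m)$ then conditions on it, the simulator's joint law over (outcome, updated $\boldsymbol u$) is exactly this same pointwise-product-then-renormalize operation. Matching the conditionals and the outcome probabilities at each step, and chaining through the (possibly outcome-dependent) choice of subsequent gates, yields $\Pr(K_{\mathrm{class}} = \boldsymbol k) = \Pr(K_{\mathrm{quant}} = \boldsymbol k)$ for the full string.

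The main obstacle, and the place I would be most careful, is the measurement step: one must verify that ``measuring the last register and discarding it'' really does act on the discrete Wigner function as the clean classical update ``read off $\boldsymbol u_m$, output $k$ with probability $W_{E_k}(\boldsymbol u_m)$, keep the rest of $\boldsymbol u$.'' This requires (i) that the post-selected quantum state $E_k^{1/2}\rho E_k^{1/2}/\Tr(E_k\rho)$ — or more precisely the reduced state on the remaining registers after the measurement — has Wigner function equal to the conditional $W_\rho(\cdot\,|\,k)$ marginalized over $\boldsymbol u_m$, and (ii) that the tensor-product structure of the phase point operators makes the marginalization over $\boldsymbol u_m$ correspond exactly to the partial trace. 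Both follow from properties~2--4 and the Born-rule formula stated in the excerpt, together with the fact that the phase point operators are informationally complete and orthogonal, but assembling them into a statement about sequential, adaptively-chosen measurements is the part that needs a genuine (if short) argument rather than a one-line appeal. Everything else — normalization, the unitary step, the efficiency claim (all objects are vectors/matrices of size $O(n)$ over $\mathbb Z_d$) — is routine bookkeeping.
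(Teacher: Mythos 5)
Your proposal is correct, and its computational core is the same as the paper's: Clifford covariance of the discrete Wigner function handles the unitary steps, and the Born rule written as $\sum_{\boldsymbol u}W_\rho(\boldsymbol u)W_{E_k}(\boldsymbol u)$ handles the measurements. Where you differ is in how the induction is organized. The paper computes the marginal probability of the last-register outcome on both sides, observes the two Wigner-space sums coincide, then factors $\text{Pr}(\boldsymbol k)$ by the chain rule over outcomes and appeals to ``a simple inductive argument'' because the simulator conditions on outcomes exactly as the quantum algorithm does; it never explicitly tracks post-measurement states. You instead carry a stronger invariant---the simulator's conditional distribution over the ontic point equals the Wigner function of the conditional reduced quantum state---which forces you to prove the measurement-update lemma you flag: measuring $E_k$ on the last register and discarding it sends $W_\rho$ to the conditioned-and-marginalized distribution $\sum_{\boldsymbol u_m}W_\rho(\boldsymbol v,\boldsymbol u_m)W_{E_k}(\boldsymbol u_m)$, renormalized. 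That lemma does hold: expand $E_k=\frac{1}{d_m}\sum_{\boldsymbol u_m}W_{E_k}(\boldsymbol u_m)A_{\boldsymbol u_m}$ and use $\Tr_m\bigl((\mathbb{I}\otimes\sqrt{E_k})\,\rho\,(\mathbb{I}\otimes\sqrt{E_k})\bigr)=\Tr_m\bigl((\mathbb{I}\otimes E_k)\rho\bigr)$, so the reduced post-measurement state has exactly that Wigner function. Be careful, though, that your in-line phrasing (pointwise product of Wigner functions for the full Lüders-updated state) is false in general; only the reduced-state version you give in your ``obstacle'' paragraph is correct, and it is the one the induction needs. The trade-off: the paper's route is shorter but leaves the adaptive step implicit, while yours costs one extra lemma and in return makes the induction over adaptively chosen gates fully explicit.
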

\begin{proof}
The input to the classical circuit is a $2n$ dit string and the transformations
are all matrices of size $2n$ with entries in $\mathbb{Z}_{d}$ so
the $2n$ dit portion of the claim is obvious.

To show that this protocol genuinely simulates the circuit it suffices
to show any string of measurement outcomes $\boldsymbol{k}=\left(k_{1}k_{2}\dots k_{n}\right)$
occurs with the same probability for both the original circuit and
the simulation. Lets first consider probability distribution $\Pr{k_{n}}$
of the outcomes of the first measurement. In the quantum circuit the
preparation $\rho_{1}\otimes\dots\otimes\rho_{n}$ is passed to the
(possibly identity) gate $U_{\boldsymbol{F}}$ and measurement $M_{n}$
with corresponding POVM $\{E_{k_{n}}\}$ is applied to the $n$th
register. The probability of getting outcome $k_{n}$ is then: 
\begin{eqnarray}
\text{Pr}_{\text{quant}}(k_{n}) & = & \Tr(U_{\boldsymbol{F}}\rho_{1}\otimes\dots\otimes\rho_{n}U_{\boldsymbol{F}}^{\dagger}\mathbb{I}\otimes\dots\otimes\mathbb{I}\otimes E_{k_{n}})\nonumber \\
 & = & \sum_{\boldsymbol{v}\in\mathbb{Z}_{d}^{2n}}W_{U_{\boldsymbol{F}}\rho_{1}\otimes\dots\otimes\rho_{n}U_{\boldsymbol{F}}^{\dagger}}(\boldsymbol{v})W_{\mathbb{I}\otimes\dots\otimes\mathbb{I}\otimes E_{k_{n}}}(\boldsymbol{v})\nonumber \\
 & = & \sum_{\boldsymbol{v}\in\mathbb{Z}_{d}^{2n}}W_{\rho_{1}\otimes\dots\otimes\rho_{n}}(\boldsymbol{F}^{-1}\boldsymbol{v})W_{\mathbb{I}\otimes\dots\otimes\mathbb{I}\otimes E_{k_{n}}}(\boldsymbol{v}).\label{eq:nth_outcome_quant}
\end{eqnarray}
 Where we have recast the inner product into the discrete Wigner form
for convenience of comparison. We must now establish that the classical
circuit has the same distribution.

Classically, if the system is initially at point $\boldsymbol{v}$
on the discrete phase space then probability of getting outcome $k_{n}$
from the simulation circuit is given by: 
\begin{eqnarray*}
\text{Pr}_{\text{class}}(k_{n}|\boldsymbol{v}\text{ sampled initially}) & = & \text{Pr}_{\text{class}}(k_{n}|\boldsymbol{Fv}\text{ final location})\\
 & = & W_{\mathbb{I}\otimes\dots\otimes\mathbb{I}\otimes E_{k_{n}}}(\boldsymbol{Fv}).
\end{eqnarray*}
Which just says that the system is moved from point $\boldsymbol{v}$
to point $\boldsymbol{Fv}$ and the probability of outcome $k_{n}$
is the probability we see the system when we look at the region of
phase space measured by $\boldsymbol{E}_{k_{n}}$, which is $W_{\boldsymbol{E}_{k_{n}}}(\boldsymbol{Fv})$
by definition. The total probability of outcome $k_{n}$ is then:
\begin{eqnarray}
\text{Pr}_{\text{class}}(k_{n}) & = & \sum_{\boldsymbol{v}\in\mathbb{Z}_{d}^{2n}}\text{Pr}_{\text{class}}(\boldsymbol{k}|\boldsymbol{v}\text{ sampled init.})\Pr{\boldsymbol{v}\text{ sampled init.}}\nonumber \\
 & = & \sum_{\boldsymbol{v}\in\mathbb{Z}_{d}^{2n}}W_{E_{k_{1}}\otimes\dots\otimes E_{k_{n}}}(\boldsymbol{Fv})W_{\rho_{1}\otimes\dots\otimes\rho_{n}}(\boldsymbol{v})\nonumber \\
 & = & \sum_{\boldsymbol{v}\in\mathbb{Z}_{d}^{2n}}W_{E_{k_{1}}\otimes\dots\otimes E_{k_{n}}}(\boldsymbol{v})W_{\rho_{1}\otimes\dots\otimes\rho_{n}}(\boldsymbol{F}^{-1}\boldsymbol{v}).\label{eq:nth_outcome_classical}
\end{eqnarray}
Comparing Algorithm Class \prettyref{eq:nth_outcome_quant}, the distribution of measurement
outcomes on the last register for the quantum circuit, and Algorithm Class \prettyref{eq:nth_outcome_classical},
the simulated distribution of measurement outcomes on the last register,
we see they are the same. 

If the quantum algorithm is independent of the measurement outcomes
then simply applying the above argument to each register would suffice
to complete the proof. However, in general adaptive schemes are possible,
such the algorithm illustrated in \prettyref{fig:Example-Circuit-1}
where the final gate applied depends on the outcome of the measurement
on the third qudit. Using the assumption that the registers are measured
from last to first we can factor the distribution of outcome strings
as 
\[
\Pr{\boldsymbol{k}}=\Pr{k_{1}|k_{2}\dots k_{n}}\Pr{k_{2}|k_{3}\dots k_{n}}\dots\Pr{k_{n-1}|k_{n}}\Pr{k_{n}}.
\]
Since the simulation conditions on measurement outcome in exactly
the same way as the original quantum algorithm a simple inductive
argument shows that the distribution of outcomes must be the same
for the quantum algorithm and its classical simulator. \end{proof}
\begin{cor}
Quantum algorithms belonging to Algorithm Class \ref{alg:Simulable-Quantum-Circuit}
offer no super linear advantage over classical computation.\end{cor}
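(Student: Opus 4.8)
The plan is to obtain the corollary directly from the theorem just proved, by bounding the running time of the classical simulator in Algorithm Class~\prettyref{alg:Equivalent-Classical-Circuit} against the resources consumed by the quantum algorithm in Algorithm Class~\prettyref{alg:Simulable-Quantum-Circuit}. Correctness is not at issue --- the theorem already guarantees that the two output distributions are identical --- so the only work is the resource bookkeeping, and the structural fact that makes it go through is that the entire classical state is a single vector $\boldsymbol{u}\in\mathbb{Z}_d^{2n}$ whose description length is linear in the number $n$ of quantum registers.

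First I would cost the preparation step. For each register the simulator samples from $W_{\rho_i}$, which is a genuine probability distribution on the $d^{2}$ points of the single-qudit phase space precisely because of the standing positivity hypothesis on the inputs; each value $\tfrac1d\Tr(A_{\boldsymbol{u}_i}\rho_i)$ is the trace of a product of fixed $d\times d$ matrices, so tabulating all $d^{2}$ of them and sampling costs $\mathrm{poly}(d)$ per register and hence $n\cdot\mathrm{poly}(d)$ overall --- linear in $n$ at fixed local dimension. Next, each Clifford gate is simulated by the update $\boldsymbol{u}\mapsto\boldsymbol{F}\boldsymbol{u}$; this is a matrix--vector product over $\mathbb{Z}_d$ of cost $O(n^{2})$, which is no more than the cost of even writing down an arbitrary symplectic $\boldsymbol{F}$, and if the gate is presented as a product of elementary constant-locality Cliffords the update touches only $O(1)$ coordinates of $\boldsymbol{u}$ per elementary gate, matching the quantum cost up to a constant factor. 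Finally, each of the $n$ measurements requires only the single conditional distribution $k\mapsto W_{E_k}(\boldsymbol{u}_m)=\Tr(A_{\boldsymbol{u}_m}E_k)$ evaluated at the \emph{one} currently-stored phase point $\boldsymbol{u}_m$ of the last register --- a constant number of $d\times d$ matrix operations per outcome --- followed by sampling $k$; the restriction to product measurements with positive representation is exactly what makes this depend on $\boldsymbol{u}_m$ alone and makes the values sum to one. Adaptivity adds nothing, since, as in the proof of the theorem, the simulator conditions on past outcomes in the same way the quantum algorithm does.

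Assembling these estimates, a quantum algorithm in Algorithm Class~\prettyref{alg:Simulable-Quantum-Circuit} on $n$ registers with $g$ elementary Clifford gates and $n$ measurements is reproduced by a classical randomized algorithm with a state of $2n$ dits running in time $n\cdot\mathrm{poly}(d)+O(g)$ --- for fixed $d$, linear in the size of the quantum algorithm, and in every case polynomially bounded by it. Hence the quantum algorithm can be no more than polynomially (indeed essentially linearly) faster than this classical procedure, so it offers no super-linear advantage over classical computation, which is the assertion of the corollary.

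I expect the only point demanding care to be the verification that the probabilistic primitives are genuinely cheap and well-defined: that $W_{\rho_i}$ really is a distribution one can sample (this uses positivity of the inputs, without which the quantity is not a probability), and that the single-point measurement quasi-probabilities $W_{E_k}(\boldsymbol{u}_m)$ are both efficiently computable and sum to unity over $k$ (this uses positivity of the measurements, as assumed). Granting these, the corollary follows immediately from the explicit form of Algorithm Class~\prettyref{alg:Equivalent-Classical-Circuit} together with the theorem.
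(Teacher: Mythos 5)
Your proposal is correct and follows essentially the same route as the paper: invoke the theorem for correctness of the simulation, then observe that because the inputs and measurements are separable and positively represented, the discrete Wigner functions factor and each sampling/update step costs only poly($d$) per register (with the phase-space point being a $2n$-dit vector), giving efficiency linear in $n$. Your version simply makes the resource bookkeeping more explicit than the paper's brief argument.
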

\begin{proof}
We have seen that if it is computationally efficient (linear in the
number of qudits) to sample from the classical distributions corresponding
to the input state and the measurements then such quantum circuits
are efficiently simulable. Since we have assumed separability of the
input and measurements and the discrete Wigner function factors this
efficient sampling is guaranteed.
\end{proof}
A couple of remarks are in order. We have restricted ourselves to
separable inputs and measurements, but this is not strictly necessary
for efficient simulation. Any positively represented preparation or
measurement can be accommodated provided it is possible to classically
efficiently sample from the corresponding distribution. Since it is
exponentially difficult to even write down general quantum states
this is rather strong restriction. 

Also, notice that our simulation protocol only samples from the output
distribution of a circuit whereas the Gottesman-Knill protocol gives
the full quantum state output in the case where the input is a pure
stabilizer state. It may appear that the present protocol is weaker
in this respect. However, the discrete Wigner function of pure stabilizer
states are uniformly valued lines on the phase space\cite{Gross2006Hudsons}
and these are fully specified by only two points. If we are promised
the input state to a Clifford circuit is a stabilizer state then we
can sample two distinct points from the corresponding distribution
and determine where the circuit maps them. These two output points
then suffice to fix the line corresponding to the output stabilizer
state. 

Finally, we note that, in the context of magic state distillation
for example, one may increase the size of the input register conditional
on measurement outcomes. This can be accounted for in the simulation
protocol above by simply increasing the size of the phase space accordingly
and sampling from the new additional positive Wigner functions.

\section{Magic State Distillation}

The main significance of the simulation result just established is
that for noisy preparation and measurement it is possible to extend
the efficient simulation of quantum circuits beyond the purview of
the stabilizer formalism. This result is of major theoretical importance,
but it also has practical significance. In particular, the simulation
scheme addresses the magic state model, which supplements error free
stabilizer resources with high fidelity additional gates produced
through the consumption of non-stabilizer ancillas. Recall that the
backbone of this process is a distillation protocol that uses stabilizer
resources applied to a large number of ancilla input states to produce
a few highly pure non-stabilizer states. An immediate corollary of
our simulation protocol is that states with positive discrete Wigner
representation are not useful for computational speedup in the magic
state model. Since this includes a large class of states outside the
stabilizer formalism this offers a resolution to the long standing
open problem of whether \emph{all} non-stabilizer states promote stabilizer
computation to universal quantum computation\cite{Bravyi_magic_state_paper}.

The class of algorithms in the previous section encompass a large
variety of magic state distillation protocols, but it is still conceptually
unclear why states with positive discrete Wigner representation are
not useful for magic state distillation. This is especially true since we typically think
of the outcome of a magic state distillation routine as a quantum
state, rather than a string of measurement outcomes as in the simulation
protocol. If we did keep track of the full input distribution then
we would be able to use it to reconstruct the quantum state output
of a distillation procedure; unfortunately this is impossible to do
efficiently, but a little thought shows it is not actually necessary.
We do not need to know the final quantum state, we only need to know
that it is not helpful for doing quantum computation. The simulation
protocol makes it clear that if the quantum state that is put into
the circuit has a Wigner function that is a genuine probability distribution
then the quantum state that is output (which we measure) must also
correspond to a genuine probability distribution. Inspired by this
observation, and in view of the great importance of magic state protocols,
we devote this section to a direct proof that negative discrete Wigner
representation of the ancilla resource states is necessary for such
states to be distillable (using stabilizer resources) to a non-stabilizer
state of arbitrary purity. 

The essential insight of the proof used
here is that negative discrete Wigner representation is a resource
that can not be created using stabilizer operations; if the input
states to a distillation protocol has no negativity in its discrete
Wigner representation then the output will not either. This resource
character is one of the major insights of the present work. Beyond
its conceptual value the alternative proof presented below also closes
several loopholes and alternative models not addressed by the simulation
protocol, which we discuss at the end of the section.

Conventional magic state protocols perform a Clifford unitary on the
input state $\rho_{\text{in}}^{\otimes n}\in L(\mathbb{C}_{d}\otimes\mathbb{C}_{d}^{n-1})$
and make a computational basis measurement on the final $n-1$ qudits,
post selecting on the $\ket 0$ outcome. This outputs the state, 
\[
\rho_{\text{out}}=\frac{\Tr_{\mathbb{C}_{d}^{n-1}}\left(\mathbb{I}\otimes\ketbra 00^{\otimes n}U\rho_{\text{in}}^{\otimes n}U^{\dagger}\mathbb{I}\otimes\ketbra 00^{\otimes n}\right)}{\text{Normalization}},
\]
where the partial trace knocks out the ancilla systems and the normalization
in the denominator just guarantees $\Tr(\rho_{\text{out}})=1$. We
examine significantly more general protocols: instead of requiring
$n$ copies of a qudit input state we allow an arbitrary positively
represented input state, in place of a Clifford unitary we allow any
completely positive map which preserves the set of positively represented
states and in place of the computational basis measurement we allow
any positively represented projective measurement. Since positive
representation is convex this suffices to eliminate classical randomness
as a potential loophole. It also precludes choices of entangled stabilizer
measurements as these are still positively represented. For convenience
of presentation we define, 
\begin{eqnarray*}
F(\rho) & = & \text{min}_{\boldsymbol{u}}\Tr(A_{\boldsymbol{u}}\rho),
\end{eqnarray*}
which is the maximally negative point of the quasi-probability representation
of $\rho$ over the phase space. If the input state to a distillation
routine is positively represented (ie. $F(\rho)\ge0$) then its output
is also positively represented:

\begin{thm} Let $\rho_{\text{in}}$ be a density operator on a $n$
qudit Hilbert space such that $F(\rho_{\text{in}})\ge0$. Let $\Lambda$
be a (completely positive) map on this space for which $F(\rho)\ge0\implies F(\Lambda(\rho))\ge0$.
Let $P$ be a positively represented projector on this space. If $\rho_{\text{out}}$
is produced by acting on $\rho_{\text{in}}$ with $\Lambda$ and post
selecting a measurement on the last $n-1$ qudits on outcome $P$
then $F(\rho_{\text{out}})\ge0$.\end{thm}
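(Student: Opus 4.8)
The plan is to split the distillation procedure into its two constituent steps and carry a positivity argument through each. Set $\sigma := \Lambda(\rho_{\text{in}})$. By the hypothesis on $\Lambda$ together with $F(\rho_{\text{in}})\ge 0$ we obtain $F(\sigma)\ge 0$ for free, so all of the work concerns the second step: conditioning on outcome $P$ on the last $n-1$ qudits and discarding them. Writing the post-selected, renormalized output as $\rho_{\text{out}} = N^{-1}\,\Tr_{n-1}\big((\mathbb{I}\otimes P)\,\sigma\,(\mathbb{I}\otimes P)\big)$, where $N>0$ is the success probability of the outcome (nonzero precisely because it is the event being conditioned on), the goal becomes to show $W_{\rho_{\text{out}}}(\boldsymbol{u})\ge 0$ for every phase point $\boldsymbol{u}$ of the surviving qudit.

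The computation I would carry out is the following. Using $P^{2}=P$ and cyclicity of the trace,
\[
W_{\rho_{\text{out}}}(\boldsymbol{u}) \;=\; \frac{1}{Nd}\,\Tr\!\big(A_{\boldsymbol{u}}\,\Tr_{n-1}((\mathbb{I}\otimes P)\sigma(\mathbb{I}\otimes P))\big) \;=\; \frac{1}{Nd}\,\Tr\!\big((A_{\boldsymbol{u}}\otimes P)\,\sigma\big).
\]
Now expand $\sigma$ in the phase point operator basis, $\sigma = \sum_{\boldsymbol{v}}W_{\sigma}(\boldsymbol{v})A_{\boldsymbol{v}}$, and use that the $n$-qudit phase point operators factor, $A_{\boldsymbol{v}} = A_{\boldsymbol{v}_{1}}\otimes\cdots\otimes A_{\boldsymbol{v}_{n}}$ with $\boldsymbol{v}=\boldsymbol{v}_{1}\oplus\cdots\oplus\boldsymbol{v}_{n}$ (the last of the listed properties of the phase point operators). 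The orthogonality relation $\Tr(A_{\boldsymbol{u}}A_{\boldsymbol{v}_{1}}) = d\,\delta(\boldsymbol{u},\boldsymbol{v}_{1})$ collapses the first tensor factor, and the measurement-representation identity $\Tr(P\,A_{\boldsymbol{w}}) = W_{P}(\boldsymbol{w})$ handles the remaining factors, yielding
\[
W_{\rho_{\text{out}}}(\boldsymbol{u}) \;=\; \frac{1}{N}\sum_{\boldsymbol{w}}W_{\sigma}(\boldsymbol{u}\oplus\boldsymbol{w})\,W_{P}(\boldsymbol{w}),
\]
the sum running over the phase space of the last $n-1$ qudits.

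Positivity is then immediate: $W_{\sigma}\ge 0$ because $F(\sigma)\ge 0$, $W_{P}\ge 0$ because $P$ is positively represented, and $N>0$, so every summand is a product of nonnegative numbers and $W_{\rho_{\text{out}}}(\boldsymbol{u})\ge 0$ for all $\boldsymbol{u}$; hence $F(\rho_{\text{out}})\ge 0$. The only step that requires care is the collapse of the trace in the second display: it uses precisely the tensor-product structure of the phase point operators together with the fact that a positively represented projector enters the Wigner formalism as a genuine conditional probability distribution. I expect this to be bookkeeping rather than a genuine obstacle. It is worth emphasizing that the argument never invokes trace preservation, Cliffordness, or the $\rho_{\text{in}}^{\otimes n}$ product structure of conventional protocols; it needs only that $\Lambda$ maps the positive region into itself, which is exactly why the conclusion holds for the broad class of maps allowed in the statement.
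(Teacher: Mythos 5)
Your proof is correct and follows essentially the same route as the paper's: both reduce the post-selected output's Wigner value to $\Tr\bigl((A_{\boldsymbol{u}}\otimes P)\,\Lambda(\rho_{\text{in}})\bigr)$ via $P^{2}=P$ and cyclicity, and then conclude positivity from the tensor-product structure of the phase point operators together with $W_{\Lambda(\rho_{\text{in}})}\ge 0$ and $W_{P}\ge 0$. The only cosmetic difference is that you expand $\sigma=\Lambda(\rho_{\text{in}})$ in phase point operators and treat a general point $\boldsymbol{u}$, whereas the paper expands $P$ and uses Heisenberg--Weyl covariance to reduce to the point $\boldsymbol{0}$; the content is the same.
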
 \begin{proof} Since we
can use Heisenberg-Weyl operations to cycle between phase point operations,
without loss of generality 
\begin{eqnarray*}
F(\tilde{\rho}) & = & \text{min}_{\boldsymbol{u}}\Tr(A_{\boldsymbol{u}}\rho)\\
 & = & \frac{\Tr(A_{\boldsymbol{0}}\otimes\mathbb{I}^{\otimes n-1}\cdot\mathbb{I}\otimes P\cdot\Lambda(\rho_{\text{in}})\cdot\mathbb{I}\otimes P)}{\text{Normalization}}\\
 & = & \frac{\Tr(A_{\boldsymbol{0}}\otimes P\cdot\Lambda(\rho_{\text{in}}))}{\text{Normalization}}.
\end{eqnarray*}
 The denominator is always positive so 
\[
F(\tilde{\rho})\ge0\iff\Tr(A_{\boldsymbol{0}}\otimes P\cdot\Lambda(\rho_{\text{in}}))\ge0.
\]

By assumption $F(\rho_{\text{in}})$ is positive, and thus so is $F(\Lambda(\rho_{\text{in}}))$.
We write $P=\sum_{\boldsymbol{v}}z_{\boldsymbol{v}}A_{\boldsymbol{v}}$
where, since $P$ is positively represented, $z_{\boldsymbol{u}}\ge0$
by the discrete Hudson's theorem. This gives us that, 
\[
\Tr(A_{\boldsymbol{0}}\otimes P\cdot\Lambda(\rho_{\text{in}}))=\sum_{\boldsymbol{v}}z_{\boldsymbol{v}}\Tr(A_{\boldsymbol{0}}\otimes A_{\boldsymbol{v}}\Lambda(\rho_{\text{in}})).
\]
 The non-negativity of $F(\Lambda(\rho_{\text{in}}))$ implies, by
definition, that $\Tr(A_{\boldsymbol{0}}\otimes A_{\boldsymbol{v}}\Lambda(\rho_{\text{in}}))\ge0$
so it must be the case that $\Tr(A_{\boldsymbol{0}}\otimes P\cdot\Lambda(\rho_{\text{in}}))\ge0$
and this implies $F(\tilde{\rho})\ge0.$ \end{proof} 

This proof has a few technical merits over the simulation result of
the previous section. Namely, it does not require the input state
to the distillation protocol to be separable or otherwise efficiently
sampleable and the update dynamics are not limited to only Clifford
unitary maps. Indeed it's not even necessary to have a description
of the input state or the transformation, all that is required is
the promise that $\Lambda(\rho_{\text{in}})$ has positive discrete
Wigner representation. Since many input states and channels do not
admit any efficient representation this means that there are resources
that are provably useless for distillation even when an efficient
simulation of the corresponding distillation process would be impossible.

\section{The Geometry of Positively Represented States}

The Gottesman-Knill theorem already establishes that the action of
(qubit) Clifford unitary operations on stabilizer states is efficiently
classically simulatable. Since the only pure states with positive
discrete Wigner representation are stabilizer states it is natural
to wonder if every positively represented state is a mixture of stabilizer
states. As we have already alluded to, remarkably this is \emph{false}.
To establish this we will clarify the geometry of the region of state
space which has positive representation and show that it strictly
contains the set of mixtures of stabilizer states. Combined with the
results of the previous sections this establishes our simulation protocol
as an extension of Gottesman-Knill and proves the existence of bound
states for magic state distillation, states which are not convex combinations
of stabilizer states but which are nevertheless not distillable using
perfect Clifford operations.

The set of convex combinations of stabilizer states is a convex polytope
with the stabilizer states as vertices. Any polytope can be defined
either in terms of its vertices or as a list of half space inequalities
called facets. Intuitively, these correspond to the faces of $3$
dimensional polyhedrons. We show that in power of prime dimension
each of the $d^{2}$ phase space point operators define a facet of
the stabilizer polytope. These are only a proper subset of the faces
of the stabilizer polytope, implying the existence of states with
positive representation which are not convex combinations of stabilizer
states. See Figure \ref{fig:Cartoon-Depicting-Intersection} for a cartoon
capturing the intuition for this result.

\begin{figure}
\includegraphics[width=0.5\columnwidth]{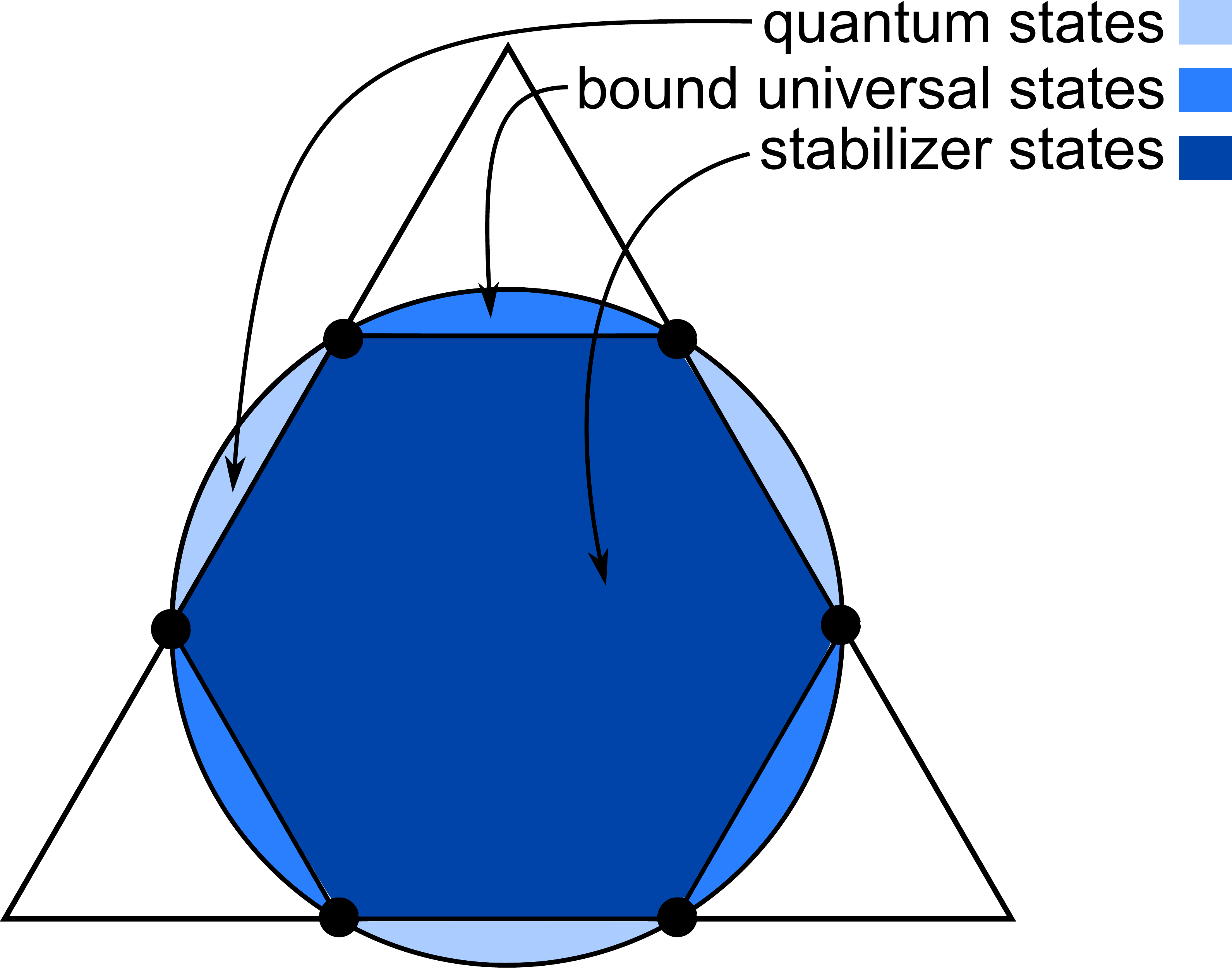}

\caption{A cartoon of the intersection of the discrete Wigner
probability simplex (the triangular region) with the quantum state
space (the circle). The simplex intersects the boundary at stabilizer
states (bold dots). The region of convex combinations of stabilizer
states is strictly contained within the set of
quantum states that also lie inside the simplex. The quantum states
outside the simplex are the bound states. Finally, the
quantum states with negative discrete Wigner representation are those lying outside the positive discrete Wigner simplex.  We show that the half space inequalities defining the facets of discrete
Wigner simplex also define the facets of the stabilizer polytope;
a fact reflected in this cartoon.\label{fig:Cartoon-Depicting-Intersection}}
\end{figure}

The stabilizer polytope may be thought of as a bounded convex polytope
living in $\mathbb{R}^{d^{2}-1}$, the space of $d$ dimensional mixed
quantum states. A minimal half space description for a polytope in
$\mathbb{R}^{D}$ is a finite set of bounding equalities called facets
$\{F_{i},f_{i}\}$ with $F_{i}\in\mathbb{R}^{D}$ and $f_{i}\in\mathbb{R}$.
$X\in\mathbb{R}^{D}$ is in the polytope if and only if $X\cdot F_{i}\le f_{i}\ \forall i$.
In the usual quantum state space the vectors $X$ of interest are
density matrices, the inner product is the trace inner product and
facets may be defined as $\{\hat{A}_{i},a_{i}\}$ where $\hat{A}_{i}$
are Hermitian matrices and 
\[
\rho\in\text{polytope}\iff\Tr(\rho\hat{A}_{i})\le a_{i}\ \forall i.
\]
 The objective is to show that $\{-A_{\boldsymbol{u}},0\}$ are facets
of the polytope defined by stabilizer state vertices.

It is possible to explicitly compute a facet description for a polytope
given the vertex description, but the complexity of this computation
scales polynomially in the number of vertices. Since the number of
stabilizer states grows super-exponentially with the number of qudits
\cite{Gross2006Hudsons} the conversion is generally impractical.
The analytic proof given here circumvents this issue. We also remark
that the work of Cormick \emph{et al} \cite{Cormick2006Classicality}
implies that the phase space point operators considered here are facets
for the case of \emph{prime} dimension. \begin{thm} The $d^{2}$
phase space point operators $\{A_{\boldsymbol{u}}\}$ with the inequalities
$\Tr(\rho A_{\boldsymbol{u}})>0$ define facets of the stabilizer
polytope. \end{thm}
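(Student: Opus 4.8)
The plan is to show that for each phase space point operator $A_{\boldsymbol{u}}$, the hyperplane $\Tr(\rho A_{\boldsymbol{u}}) = 0$ supports the stabilizer polytope and touches it in a face of full dimension $d^2 - 2$. Since Clifford unitaries act transitively on the phase space points via $U_{\boldsymbol{F},\boldsymbol{a}} A_{\boldsymbol{u}} U_{\boldsymbol{F},\boldsymbol{a}}^\dagger = A_{\boldsymbol{F}\boldsymbol{u}+\boldsymbol{a}}$ and map stabilizer states to stabilizer states, it suffices to treat a single point, say $\boldsymbol{u} = \boldsymbol{0}$; the statement for all other points then follows by symmetry. So I would fix $A_{\boldsymbol{0}} = \sum_{\boldsymbol{v}} T_{\boldsymbol{v}}$ and study the set $\mathcal{S}_0$ of stabilizer states $\ket{S}$ with $\Tr(A_{\boldsymbol{0}} \ketbra{S}{S}) = 0$.

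First I would verify that $\{-A_{\boldsymbol{u}}, 0\}$ is a \emph{valid} inequality for the polytope: by the discrete Hudson's theorem every stabilizer state has $\Tr(A_{\boldsymbol{u}} \ketbra{S}{S}) \ge 0$, hence every convex combination does too, so the stabilizer polytope lies in the half-space $\Tr(\rho A_{\boldsymbol{u}}) \ge 0$. The content is then entirely in showing the face $\{\rho : \Tr(\rho A_{\boldsymbol{u}}) = 0\}$ has the right dimension, i.e. that the stabilizer states lying on it affinely span a $(d^2 - 2)$-dimensional flat. Equivalently, I would show that the only Hermitian trace-one operator $H$ with $\Tr(H A_{\boldsymbol{0}}) \le 0$ (or $= 0$) satisfied with equality by all of $\mathcal{S}_0$ — i.e. the only valid inequality tight on $\mathcal{S}_0$ — is a multiple of $A_{\boldsymbol{0}}$ itself. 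The cleanest route is a linear-algebra / representation-theory argument: expand any candidate functional in the operator basis $\{A_{\boldsymbol{v}}\}$ (or equivalently $\{T_{\boldsymbol{v}}\}$), use that $\mathcal{S}_0$ is invariant under the stabilizer of the point $\boldsymbol{0}$ inside the Clifford group (which acts as $\mathrm{Sp}(2n,\mathbb{Z}_p)$ on the remaining phase space labels), and invoke the fact that this symplectic action is such that the only invariant directions are $A_{\boldsymbol{0}}$ and the identity. Concretely: a linear functional vanishing on all of $\mathcal{S}_0$ and on no larger affine set must, after averaging over the symplectic stabilizer subgroup, be symplectically invariant; the invariants are spanned by $\mathbb{I}$ and $A_{\boldsymbol{0}}$; combined with the normalization $\Tr(\rho) = 1$ this pins the functional down to $A_{\boldsymbol{0}}$ up to affine equivalence, giving dimension exactly $d^2 - 2$.

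To make the averaging argument work I need enough explicit stabilizer states in $\mathcal{S}_0$ to rule out degenerate possibilities — in particular I would exhibit a concrete family of stabilizer states on the hyperplane $\Tr(A_{\boldsymbol{0}}\ketbra{S}{S}) = 0$ and check they are not all contained in any proper symplectic-invariant subspace. A natural candidate family: stabilizer states whose Wigner function is the indicator (up to normalization) of an affine line \emph{not} through $\boldsymbol{0}$; since pure stabilizer states have Wigner functions uniformly supported on lines in phase space, any such line avoiding the origin gives $W(\boldsymbol{0}) = 0$, hence $\Tr(A_{\boldsymbol{0}}\ketbra{S}{S}) = 0$. There are enough such lines, and their Clifford orbit is large enough, that their affine span has the full codimension-one-within-the-hyperplane dimension. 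I would then combine: (i) these explicit points live on the hyperplane, (ii) their symplectic orbit spans a $(d^2-2)$-dimensional affine subspace, (iii) the hyperplane itself is $(d^2-2)$-dimensional, to conclude the face is a facet.

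The main obstacle is step (ii) — proving the explicit stabilizer states on the hyperplane actually affinely span the full $(d^2 - 2)$-dimensional flat, rather than some lower-dimensional sub-flat. This is where the prime-power hypothesis and the structure of $\mathrm{Sp}(2n,\mathbb{Z}_p)$ really enter: one must show the representation of the point stabilizer on the span of $\{A_{\boldsymbol{v}} : \boldsymbol{v} \ne \boldsymbol{0}\}$ has no invariant subspace that could trap the line-states. I expect this to follow from the known irreducibility (or near-irreducibility, modulo the all-ones vector) of the natural permutation representation of the symplectic group on the nonzero phase-space points, which is essentially the content of the Cormick \emph{et al.} computation cited in the paper for prime dimension; extending it to prime-power dimension via the tensor-product structure of the phase point operators (property 4 above) is the additional bookkeeping required. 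Everything else — validity of the inequality, transitivity reducing to $\boldsymbol{u} = \boldsymbol{0}$, the dimension count — is routine once that irreducibility input is in hand.
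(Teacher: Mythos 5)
Your framing (reduce to $\boldsymbol{u}=\boldsymbol{0}$ by Clifford covariance, validity from the discrete Hudson's theorem, then show the saturating stabilizer states affinely span a $(d^{2}-2)$-dimensional flat) is the right skeleton, but the decisive step is exactly the one you leave open, and the justification you sketch for it does not hold up. First, the averaging argument is circular as stated: averaging a functional $H$ that vanishes on $\mathcal{S}_{0}$ over the symplectic point stabilizer only tells you that the \emph{invariant component} of $H$ lies in $\mathrm{span}\{\mathbb{I},A_{\boldsymbol{0}}\}$; it cannot exclude a non-invariant functional vanishing on all of $\mathcal{S}_{0}$ unless you already know that $\mathcal{S}_{0}$ spans the full codimension-two flat --- which is precisely your step (ii). Second, the input you hope to import, irreducibility (modulo the all-ones vector) of the permutation representation of $\mathrm{Sp}(2n,\mathbb{Z}_{p})$ on the nonzero phase-space points, is false for odd $p$: the stabilizer of a nonzero point has many orbits on the remaining points (distinguished, e.g., by the value of the symplectic form and by proportionality), so that representation has several nontrivial irreducible constituents. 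What your route actually requires is the concrete spanning statement that the Wigner supports of stabilizer states avoiding the origin (lines for a single prime qudit, affine isotropic flats of maximal dimension for $d=p^{n}$) span the full space of functions vanishing at $\boldsymbol{0}$; you neither prove this nor reduce it to a correct known fact, and your ``line'' picture does not directly carry over to prime-power dimension, where the supports are not lines in a two-dimensional phase space.

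For comparison, the paper avoids representation theory altogether. It fixes a complete set of mutually unbiased bases, which in prime-power dimension consists of $d(d+1)$ stabilizer states, and runs a short orthogonality/counting argument: if two distinct states $V_{0},V_{1}$ of the \emph{same} basis both had $\Tr(V_{i}A_{\boldsymbol{u}})>0$, then since all terms in $\Tr(V_{0}V_{1})=\frac{1}{d}\sum_{\boldsymbol{v}}\Tr(V_{0}A_{\boldsymbol{v}})\Tr(V_{1}A_{\boldsymbol{v}})$ are non-negative (the MUB states are positively represented), the overlap would be strictly positive, contradicting orthogonality. Hence at most one state per basis, i.e.\ at most $d+1$ states in total, can fail to saturate, leaving at least $d(d+1)-(d+1)=d^{2}-1$ MUB stabilizer states with $\Tr(A_{\boldsymbol{u}}V_{i})=0$, which supply the required independent vertices. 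If you want to salvage your approach, you would need to prove the spanning claim for origin-avoiding isotropic flats directly (or fall back on something like the paper's MUB counting), since the symmetry argument alone cannot deliver the facet dimension.
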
 \begin{proof} To establish that a halfspace
inequality for a polytope in $\mathbb{R}^{D}$ is a facet there are
two requirements: every vertex must satisfy the inequality and there
must be a set of vertices saturating the inequality which span a space
of dimension $D$ \cite{Ziegler1995Lectures}.

The requirement that all vertices satisfy the half space inequality
is $\Tr(A_{\boldsymbol{u}}S)\ge0$ for every stabilizer state $S$,
and this the discrete Hudson's theorem.

We consider the stabilizer polytope as an object in $\mathbb{R}^{d^{2}-1}$
and look for a set of $d^{2}-1$ linearly independent vertices which
satisfy $\Tr(A_{\boldsymbol{u}}S)=0$. Since we are restricting to
power of prime dimension we may choose a complete set of mutually
unbiased bases of $d(d+1)$ states from the full set of stabilizer
states. Suppose more than $d+1$ states $V_{i}$ from this set satisfy
$\Tr(V_{i}A_{\boldsymbol{u}})>0.$ Then a counting arguments shows
that there must be two distinct states $V_{0},V_{1}$ belonging to
an orthonormal basis which satisfy this criterion. But then 
\begin{align*}
\Tr(V_{0}V_{1}) & =\frac{1}{d}\sum_{\boldsymbol{v}}\Tr(V_{0}A_{\boldsymbol{v}})\Tr(V_{1}A_{\boldsymbol{v}})\\
 & \ge\frac{1}{d}\Tr(V_{0}A_{\boldsymbol{u}})\Tr(V_{1}A_{\boldsymbol{u}})\neq0,
\end{align*}
 which contradicts the orthonormality. Thus at least $d(d+1)-(d+1)$
states in the mutually unbiased bases satisfy $\Tr(A_{\boldsymbol{u}}V_{i})=0$.
These are the required a set of $d^{2}-1$ linearly independent vertices.
\end{proof} The phase space point operators considered here give
only a proper subset of the defining halfspace inequalities for the
stabilizer polytope. This means that there are states that may not
be written as a convex combination of stabilizer states which nevertheless
satisfy $\Tr(A_{\boldsymbol{u}}\rho)\ge0$ for all phase space point
operators. That is, there are positive states which are not in the
convex hull of stabilizer states. These are bound states for magic
state distillation. An explicit example of such a state for the qutrit
is given in \cite{Gross2006Hudsons}.  These regions can be visualized by taking two and three dimensional slices of the qutrit state space.  Such slices are depicted in Figures \ref{fig:Projection-of-Postively} and \ref{fig:2D}.

\begin{figure}[ht]
\includegraphics[width=.9\columnwidth]{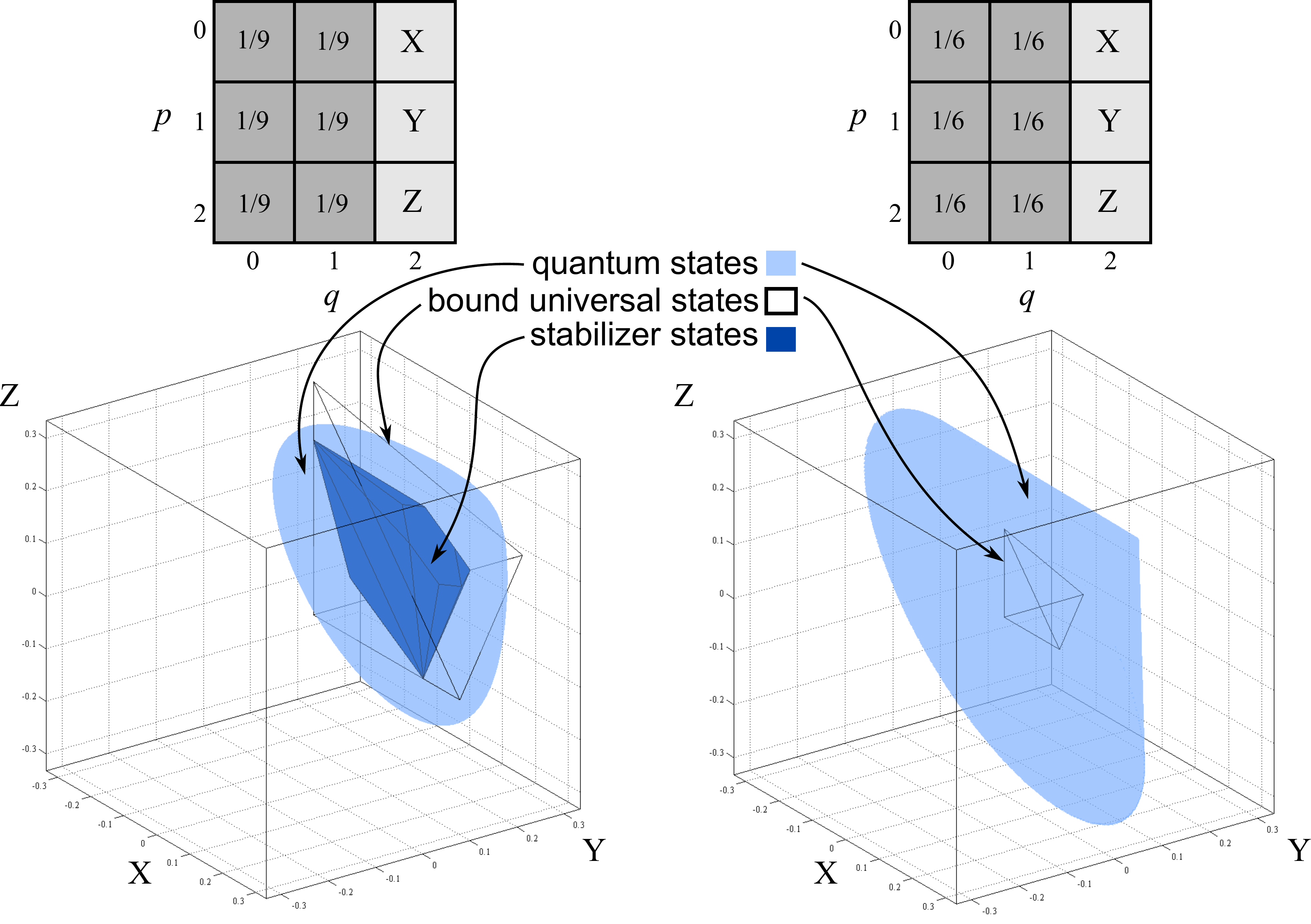}

\caption{Orthogonal 3-dimensional slices of qutrit state space.  Above each slice is the six values of the Wigner function which are fixed at a value of $1/9$ (left) and $1/6$ (right).  The remaining three values are allowed to vary and carve out regions depicted in the graphs.  In each case, due to symmetries, there are ${9 \choose 6} = 84$ such slices which are identical (up to a relabeling of the axes).  Note that the slice on the right does not cut through the stabilizer polytope but does contain a region of bound states.  Also note that this slice contains one of the $9$ states with a maximal negativity of $1/3$ while the slice on the left, and those equivalent up to permutations, are the only ones which feature the maximally mixed state $(X,Y,Z)=1/9$. See also Figure \ref{fig:2D} for 2-dimensional slice of the figure on the left.  \label{fig:Projection-of-Postively}}
\end{figure}

\begin{figure}[ht]
\includegraphics[width=0.75\columnwidth]{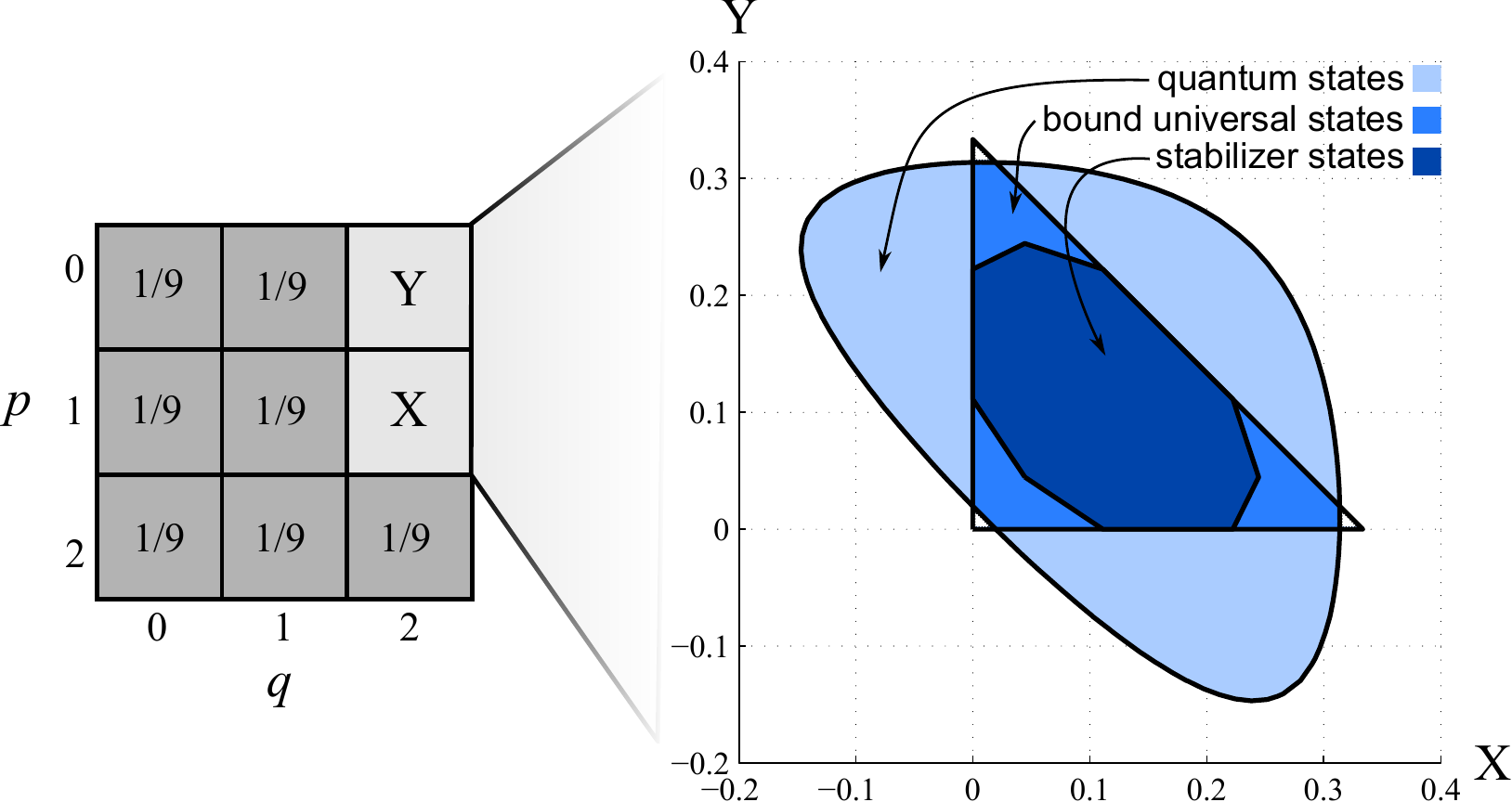}

\caption{Orthogonal 2-dimensional slice of qutrit state space.  On the left are the seven values of the Wigner function which are fixed at a value of $1/9$.  The remaining two are allowed to vary.  The maximally mixed state is the point $(X,Y) = (1,1)/9$.  The various regions carved out by varying these values are shown on the right.  There are ${9 \choose 7} = 36$ such slices which are identical (up to a relabeling of the axes).  These would be the only slices featuring the maximally mixed state.  Note the remarkable similarity the caricature in \ref{fig:Cartoon-Depicting-Intersection}.\label{fig:2D}}
\end{figure}

\section{Discussion and Conclusions}

We have shown that for systems of odd power of prime dimension a necessary
condition for computational speedup using Clifford unitaries is negativity
of the discrete Wigner representation of the inputs. This result is
immediately relevant in the context of magic state distillation,
where it shows that a necessary condition for distillability is negative
representation of the ancilla preparation. We have also shown that
the phase space point operators defining the discrete Wigner function
correspond to a privileged set of facets of the stabilizer polytope.
Together the two results imply the existence of non-stabilizer resources
which do not promote Clifford computation to universal quantum computation;
and in particular this establishes the existence of bound states for magic state distillation, or bound universal states.

We motivated the development of negative discrete Wigner representation
by analogy to entanglement theory, with Clifford operations playing
the role of local operations and classical communication and stabilizer states playing the role
of separable states. It is known that there are slightly entangled
mixed states that can not be consumed by distillation routines to
produce highly entangled states\cite{HorodeckMafiaBoundEnt}. The
non-stabilizer but positively represented quantum states are exactly analogous
to these bound entangled states. Similarly, it is known that for pure states large amounts
of entanglement are required for quantum computational speedup\cite{VidalEntang},
but for mixed states this is still an open question. However, for negative
discrete Wigner representation there is no relevant distinction between
mixed states and pure states. Moreover, although it is not yet known
whether every negatively represented state is distillable we conjecture
this to be the case.

The discrete Wigner function considered in this work is defined by analogy with the
more familiar continuous variable Wigner function. It is natural to
wonder if the results of this work extend to the infinite dimensional
case. In the continuous case a pure state has positive Wigner representation
if and only if it is a Gaussian state\cite{Hudson1974When} (like stabilizer
states in the discrete case) and a unitary evolution acts as a symplectic
flow on the phase space if and only if it corresponds to a quadratic
Hamiltonian \cite{Weedbrook2011Gaussian} (like Clifford gates
in the discrete case). A result of Brocker and Werner \cite{Brocker1995Mixed} shows that there
are mixed states that can not be represented as probabilistic combinations
of Gaussian states but that nevertheless have positive Wigner function.
The natural question is then: is it possible to efficiently simulate
quantum systems with quadratic Hamiltonian dynamics acting on non-Gaussian
mixed states if these states have positive representation?  The answer to this question, \emph{yes}, is obtained by giving
an explicit efficient simulation protocol for such systems \cite{thefuture}. 
This establishes that two of the main results of the present paper (efficient simulation and
non-stabilizer mixed states with positive discrete Wigner representation)
extend to the continuous case. It is interesting to ask if our third
result, that negative discrete Wigner representation is a necessary
resource for distillation, also has an analogue. That is, does the
continuous variable case admit something analogous to the magic state
model which allows noisy negatively represented states to be consumed
in order to promote linear optics to full universal quantum computational
power?

Of course there remains a final detail that we have not yet addressed.
The discrete Wigner function underpinning our analysis is only defined
for odd dimensional systems; is it possible to find a similar construction
for qubits? The discrete Wigner function used here has two crucial
properties: Clifford operations are stochastic transformations on
the underlying phase space and this phase space is separable. For
qubit systems there is no known analogue. Indeed it is easy to see
that a quasi-probability representation defined by any proper subset
of the facets of the qubit stabilizer polytope in a fashion analogous
to what has been done here will assign positive representation to
some subset of the magic states. The construction of the discrete
Wigner function, and the Clifford group, relies critically on the
mathematics of finite fields, and it is well known that fields of
characteristic 2 behave fundamentally differently than fields of any
other characteristic. This fact is reflected in the theory of error
correction where somewhat different protocols are required for dealing
with bits and qubits than are used for dits and qudits. In the case
of error correction, although qubits require more involved mathematics,
the conceptual underpinnings are the same irrespective of the underlying
dimensionality of the dits. It is not unreasonable to hope that a
similar result will hold for qubits and that an appropriately
modified mathematical strategy will preserve the conceptual
insights and related technical results obtained in the qudit case. If this turns out not to be the case
then understanding exactly why the model fails for qubits will undoubtedly
provide deep insights into the workings of quantum theory and quantum
computation.

The most interesting outstanding question raised by this work is whether
the ability to prepare any state with negative discrete Wigner representation
is sufficient to promote Clifford computation to universal quantum
computation. In \emph{prime} dimension the discrete Wigner construction
is the \emph{unique} choice of quasi-probability representation covariant
under the action of Clifford operations \cite{Gross2006Hudsons},
where the law of total probability is required to hold. On this basis
we conjecture that the condition here is sufficient. From the work
of Campbell, Anwar and Browne \cite{Campbell2012Magic} it is already known that access to any
non-stabilizer \emph{pure} state (or equivalently any negatively represented
pure state) suffices. If this conjecture is true then this implies an equivlance of two previously unrelated concepts of 
non-classicality, namely, quantum computational speedup and negative
quasi-probability representation.

\emph{Acknowledgements - } We thank Nathan Weibe, Daniel Gottesman
and Ben Reichardt for helpful comments.   After completion of this work we became aware of 
complementary work by Mari and Eisert \cite{mari_sim} which develops an alternative efficient simulation for
systems with positively represented Wigner function in odd and infinite dimenions. We acknowledge financial
support from CIFAR, the Government of Canada through NSERC, and
the Ontario Government through OGS and ERA. J.E. thanks the Perimeter Institute for Theoretical Physics where this work was brought to completion.

\bibliographystyle{plainnat}
\bibliography{magic_state,csferrie}

\begin{thebibliography}{48}
\providecommand{\natexlab}[1]{#1}
\providecommand{\url}[1]{\texttt{#1}}
\expandafter\ifx\csname urlstyle\endcsname\relax
  \providecommand{\doi}[1]{doi: #1}\else
  \providecommand{\doi}{doi: \begingroup \urlstyle{rm}\Url}\fi

\bibitem[Aaronson and Gottesman(2004)]{Aaronson2004Improved}
Scott Aaronson and Daniel Gottesman.
\newblock {Improved simulation of stabilizer circuits}.
\newblock \emph{Physical Review A}, 70\penalty0 (5):\penalty0 052328+, November
  2004.
\newblock \doi{10.1103/PhysRevA.70.052328}.
\newblock URL \url{http://dx.doi.org/10.1103/PhysRevA.70.052328}.

\bibitem[Anwar et~al.(2012)Anwar, Campbell, and Browne]{Anwar2012Qutrit}
Hussain Anwar, Earl~T. Campbell, and Dan~E. Browne.
\newblock {Qutrit magic state distillation}.
\newblock \emph{New Journal of Physics}, 14\penalty0 (6):\penalty0 063006+,
  June 2012.
\newblock ISSN 1367-2630.
\newblock \doi{10.1088/1367-2630/14/6/063006}.
\newblock URL \url{http://dx.doi.org/10.1088/1367-2630/14/6/063006}.

\bibitem[Bravyi and Kitaev(2004)]{Bravyi2004Universal}
Sergei Bravyi and Alexei Kitaev.
\newblock {Universal Quantum Computation with ideal Clifford gates and noisy
  ancillas}.
\newblock \emph{Physical Review A}, 71\penalty0 (2):\penalty0 022316+, December
  2004.
\newblock ISSN 1050-2947.
\newblock \doi{10.1103/PhysRevA.71.022316}.
\newblock URL \url{http://dx.doi.org/10.1103/PhysRevA.71.022316}.

\bibitem[Bravyi and Kitaev(2005)]{Bravyi_magic_state_paper}
Sergey Bravyi and Alexei Kitaev.
\newblock Universal quantum computation with ideal clifford gates and noisy
  ancillas.
\newblock \emph{Phys. Rev. A}, 71:\penalty0 022316, Feb 2005.
\newblock \doi{10.1103/PhysRevA.71.022316}.
\newblock URL \url{http://link.aps.org/doi/10.1103/PhysRevA.71.022316}.

\bibitem[Br\"{o}cker and Werner(1995)]{Brocker1995Mixed}
T.~Br\"{o}cker and R.~F. Werner.
\newblock {Mixed states with positive Wigner functions}.
\newblock \emph{Journal of Mathematical Physics}, 36\penalty0 (1):\penalty0
  62--75, 1995.
\newblock URL
  \url{http://scitation.aip.org/getabs/servlet/GetabsServlet?prog=normal\&id=JMAPAQ000036000001000062000001\&idtype=cvips\&gifs=yes}.

\bibitem[Campbell and Browne(2009)]{campbell_struct_of_dist_protocols}
Earl Campbell and Dan Browne.
\newblock On the structure of protocols for magic state distillation.
\newblock In Andrew Childs and Michele Mosca, editors, \emph{Theory of Quantum
  Computation, Communication, and Cryptography}, volume 5906 of \emph{Lecture
  Notes in Computer Science}, pages 20--32. Springer Berlin / Heidelberg, 2009.
\newblock ISBN 978-3-642-10697-2.
\newblock URL \url{http://dx.doi.org/10.1007/978-3-642-10698-9_3}.

\bibitem[Campbell(2011)]{Campbell_magic_state_catalysis}
Earl~T. Campbell.
\newblock Catalysis and activation of magic states in fault-tolerant
  architectures.
\newblock \emph{Phys. Rev. A}, 83:\penalty0 032317, Mar 2011.
\newblock \doi{10.1103/PhysRevA.83.032317}.
\newblock URL \url{http://link.aps.org/doi/10.1103/PhysRevA.83.032317}.

\bibitem[Campbell and Browne(2010)]{Campbell_bound_magic_states}
Earl~T. Campbell and Dan~E. Browne.
\newblock Bound states for magic state distillation in fault-tolerant quantum
  computation.
\newblock \emph{Phys. Rev. Lett.}, 104:\penalty0 030503, Jan 2010.
\newblock \doi{10.1103/PhysRevLett.104.030503}.
\newblock URL \url{http://link.aps.org/doi/10.1103/PhysRevLett.104.030503}.

\bibitem[Campbell et~al.(2012)Campbell, Anwar, and Browne]{Campbell2012Magic}
Earl~T. Campbell, Hussain Anwar, and Dan~E. Browne.
\newblock {Magic state distillation in all prime dimensions using quantum
  Reed-Muller codes}.
\newblock May 2012.
\newblock URL \url{http://arxiv.org/abs/1205.3104}.

\bibitem[Chen et~al.(2008)Chen, Chung, Cross, Zeng, and
  Chuang]{Chen2008Subsystem}
Xie Chen, Hyeyoun Chung, Andrew~W. Cross, Bei Zeng, and Isaac~L. Chuang.
\newblock {Subsystem stabilizer codes cannot have a universal set of
  transversal gates for even one encoded qudit}.
\newblock \emph{Physical Review A}, 78:\penalty0 012353+, July 2008.
\newblock \doi{10.1103/PhysRevA.78.012353}.
\newblock URL \url{http://dx.doi.org/10.1103/PhysRevA.78.012353}.

\bibitem[Cormick et~al.(2006)Cormick, Galv\~{a}o, Gottesman, Paz, and
  Pittenger]{Cormick2006Classicality}
Cecilia Cormick, Ernesto~F. Galv\~{a}o, Daniel Gottesman, Juan~P. Paz, and
  Arthur~O. Pittenger.
\newblock {Classicality in discrete Wigner functions}.
\newblock \emph{Physical Review A (Atomic, Molecular, and Optical Physics)},
  73\penalty0 (1):\penalty0 012301+, 2006.
\newblock \doi{10.1103/PhysRevA.73.012301}.
\newblock URL \url{http://dx.doi.org/10.1103/PhysRevA.73.012301}.

\bibitem[Datta et~al.(2005)Datta, Flammia, and Caves]{Datta2005Entanglement}
Animesh Datta, Steven~T. Flammia, and Carlton~M. Caves.
\newblock {Entanglement and the power of one qubit}.
\newblock \emph{Physical Review A}, 72\penalty0 (4):\penalty0 042316+, October
  2005.
\newblock \doi{10.1103/PhysRevA.72.042316}.
\newblock URL \url{http://dx.doi.org/10.1103/PhysRevA.72.042316}.

\bibitem[Dou\c{c}ot and Vidal(2002)]{Doucot2002Pairing}
Benoit Dou\c{c}ot and Julien Vidal.
\newblock {Pairing of Cooper Pairs in a Fully Frustrated Josephson-Junction
  Chain}.
\newblock \emph{Physical Review Letters}, 88\penalty0 (22):\penalty0 227005+,
  May 2002.
\newblock \doi{10.1103/PhysRevLett.88.227005}.
\newblock URL \url{http://dx.doi.org/10.1103/PhysRevLett.88.227005}.

\bibitem[Eastin and Knill(2009)]{Eastin2009Restrictions}
Bryan Eastin and Emanuel Knill.
\newblock {Restrictions on Transversal Encoded Quantum Gate Sets}.
\newblock \emph{Physical Review Letters}, 102:\penalty0 110502+, March 2009.
\newblock \doi{10.1103/PhysRevLett.102.110502}.
\newblock URL \url{http://dx.doi.org/10.1103/PhysRevLett.102.110502}.

\bibitem[Eisert and Mari()]{mari_sim}
J.~Eisert and A.~Mari.

\bibitem[Ferrie and Emerson(2008)]{Ferrie2008Frame}
C.~Ferrie and J.~Emerson.
\newblock Frame representations of quantum mechanics and the necessity of
  negativity in quasi-probability representations.
\newblock \emph{Journal of Physics A: Mathematical and Theoretical},
  41:\penalty0 352001, 2008.

\bibitem[Ferrie et~al.(In Preparation)Ferrie, Veitch, Wiebe, and
  Emerson]{thefuture}
Chris Ferrie, Victor Veitch, Nathan Wiebe, and Joseph Emerson.
\newblock {Generalized Version of the Gottesman-Knill theorem for
  Continuous-Variables}.
\newblock In Preparation.

\bibitem[Ferrie(2011)]{Ferrie2011Quasiprobability}
Christopher Ferrie.
\newblock {Quasi-probability representations of quantum theory with
  applications to quantum information science}.
\newblock \emph{Reports on Progress in Physics}, 74\penalty0 (11):\penalty0
  116001+, November 2011.
\newblock ISSN 0034-4885.
\newblock \doi{10.1088/0034-4885/74/11/116001}.
\newblock URL \url{http://arxiv.org/abs/1010.2701}.

\bibitem[Ferrie and Emerson(2009)]{Ferrie2009Framed}
Christopher Ferrie and Joseph Emerson.
\newblock {Framed Hilbert space: hanging the quasi-probability pictures of
  quantum theory}.
\newblock \emph{New Journal of Physics}, 11\penalty0 (6):\penalty0 063040+,
  2009.
\newblock \doi{10.1088/1367-2630/11/6/063040}.
\newblock URL \url{http://dx.doi.org/10.1088/1367-2630/11/6/063040}.

\bibitem[Ferrie et~al.(2010)Ferrie, Morris, and Emerson]{Ferrie2010Necessity}
Christopher Ferrie, Ryan Morris, and Joseph Emerson.
\newblock {Necessity of negativity in quantum theory}.
\newblock \emph{Physical Review A}, 82\penalty0 (4):\penalty0 044103+, October
  2010.
\newblock \doi{10.1103/PhysRevA.82.044103}.
\newblock URL \url{http://dx.doi.org/10.1103/PhysRevA.82.044103}.

\bibitem[Galv\~{a}o(2005)]{Galvao2005Discrete}
Ernesto~F. Galv\~{a}o.
\newblock {Discrete Wigner functions and quantum computational speedup}.
\newblock \emph{Physical Review A}, 71\penalty0 (4):\penalty0 042302+, April
  2005.
\newblock \doi{10.1103/PhysRevA.71.042302}.
\newblock URL \url{http://dx.doi.org/10.1103/PhysRevA.71.042302}.

\bibitem[Gibbons et~al.(2004)Gibbons, Hoffman, and
  Wootters]{Gibbons2004Discrete}
Kathleen~S. Gibbons, Matthew~J. Hoffman, and William~K. Wootters.
\newblock {Discrete phase space based on finite fields}.
\newblock \emph{Physical Review A}, 70\penalty0 (6):\penalty0 062101+, December
  2004.
\newblock \doi{10.1103/PhysRevA.70.062101}.
\newblock URL \url{http://dx.doi.org/10.1103/PhysRevA.70.062101}.

\bibitem[Gottesman(1997)]{Gottesman1997Stabilizer}
Daniel Gottesman.
\newblock \emph{{Stabilizer Codes and Quantum Error Correction}}.
\newblock PhD thesis, California Institute of Technology, May 1997.
\newblock URL \url{http://arxiv.org/abs/quant-ph/9705052v1}.

\bibitem[Gottesman(2006)]{Gottesman2006Quantum}
Daniel Gottesman.
\newblock \emph{{Quantum Error Correction and Fault-Tolerance}}, pages
  196--201.
\newblock Elsevier, July 2006.
\newblock URL \url{http://arxiv.org/abs/quant-ph/0507174}.

\bibitem[Gross(2006)]{Gross2006Hudsons}
D.~Gross.
\newblock {Hudson's theorem for finite-dimensional quantum systems}.
\newblock \emph{Journal of Mathematical Physics}, 47\penalty0 (12):\penalty0
  122107+, 2006.
\newblock URL
  \url{http://scitation.aip.org/getabs/servlet/GetabsServlet?prog=normal\&id=JMAPAQ000047000012122107000001\&idtype=cvips\&gifs=yes}.

\bibitem[Gross(2007)]{Gross2007Nonnegative}
D.~Gross.
\newblock {Non-negative Wigner functions in prime dimensions}.
\newblock \emph{Applied Physics B}, 86\penalty0 (3):\penalty0 367--370,
  February 2007.
\newblock ISSN 0946-2171.
\newblock \doi{10.1007/s00340-006-2510-9}.
\newblock URL \url{http://arxiv.org/abs/quant-ph/0702004}.

\bibitem[Gross et~al.(2007)Gross, Audenaert, and Eisert]{Gross2007Evenly}
D.~Gross, K.~Audenaert, and J.~Eisert.
\newblock {Evenly distributed unitaries: On the structure of unitary designs}.
\newblock \emph{Journal of Mathematical Physics}, 48\penalty0 (5):\penalty0
  052104+, 2007.
\newblock \doi{10.1063/1.2716992}.
\newblock URL \url{http://dx.doi.org/10.1063/1.2716992}.

\bibitem[Horodecki et~al.(1998)Horodecki, Horodecki, and
  Horodecki]{HorodeckMafiaBoundEnt}
Micha\l{} Horodecki, Pawe\l{} Horodecki, and Ryszard Horodecki.
\newblock Mixed-state entanglement and distillation: Is there a ``bound''
  entanglement in nature?
\newblock \emph{Phys. Rev. Lett.}, 80:\penalty0 5239--5242, Jun 1998.
\newblock \doi{10.1103/PhysRevLett.80.5239}.
\newblock URL \url{http://link.aps.org/doi/10.1103/PhysRevLett.80.5239}.

\bibitem[Hudson(1974)]{Hudson1974When}
R.~Hudson.
\newblock {When is the wigner quasi-probability density non-negative?}
\newblock \emph{Reports on Mathematical Physics}, 6\penalty0 (2):\penalty0
  249--252, October 1974.
\newblock ISSN 00344877.
\newblock \doi{10.1016/0034-4877(74)90007-X}.
\newblock URL \url{http://dx.doi.org/10.1016/0034-4877(74)90007-X}.

\bibitem[Kenfack and \.{Z}yczkowski(2004)]{Kenfack2004Negativity}
Anatole Kenfack and Karol \.{Z}yczkowski.
\newblock {Negativity of the Wigner function as an indicator of
  non-classicality}.
\newblock \emph{Journal of Optics B: Quantum and Semiclassical Optics},
  6\penalty0 (10):\penalty0 396+, October 2004.
\newblock ISSN 1464-4266.
\newblock \doi{10.1088/1464-4266/6/10/003}.
\newblock URL \url{http://dx.doi.org/10.1088/1464-4266/6/10/003}.

\bibitem[Knill and Laflamme(1998)]{Knill1998Power}
E.~Knill and R.~Laflamme.
\newblock {Power of One Bit of Quantum Information}.
\newblock \emph{Physical Review Letters}, 81\penalty0 (25):\penalty0
  5672--5675, December 1998.
\newblock \doi{10.1103/PhysRevLett.81.5672}.
\newblock URL \url{http://dx.doi.org/10.1103/PhysRevLett.81.5672}.

\bibitem[Lloyd(2002)]{Lloyd2002Quantum}
Seth Lloyd.
\newblock {Quantum Computation with Abelian Anyons}.
\newblock \emph{Quantum Information Processing}, 1\penalty0 (1):\penalty0
  13--18, April 2002.
\newblock ISSN 15700755.
\newblock \doi{10.1023/A:1019649101654}.
\newblock URL \url{http://dx.doi.org/10.1023/A:1019649101654}.

\bibitem[Mandel(1986)]{Mandel1986NonClassical}
L.~Mandel.
\newblock {Non-Classical States of the Electromagnetic Field}.
\newblock \emph{Physica Scripta}, 1986\penalty0 (T12):\penalty0 34+, January
  1986.
\newblock ISSN 0031-8949.
\newblock \doi{10.1088/0031-8949/1986/T12/005}.
\newblock URL \url{http://dx.doi.org/10.1088/0031-8949/1986/T12/005}.

\bibitem[Moore and Read(1991)]{Moore1991Nonabelions}
Gregory Moore and Nicholas Read.
\newblock {Nonabelions in the fractional quantum hall effect}.
\newblock \emph{Nuclear Physics B}, 360\penalty0 (2-3):\penalty0 362--396,
  August 1991.
\newblock ISSN 05503213.
\newblock \doi{10.1016/0550-3213(91)90407-O}.
\newblock URL \url{http://dx.doi.org/10.1016/0550-3213(91)90407-O}.

\bibitem[N.~Ratanje(2011)]{ratanje_gen_entanglement}
S.~Virmani N.~Ratanje.
\newblock {Generalised state spaces and non-locality in fault tolerant quantum
  computing schemes}.
\newblock January 2011.
\newblock URL \url{http://arxiv.org/abs/1007.3455v2}.

\bibitem[Reichardt(2005)]{Reichardt2005Quantum}
Ben~W. Reichardt.
\newblock {Quantum Universality from Magic States Distillation Applied to CSS
  Codes}.
\newblock \emph{Quantum Information Processing}, 4\penalty0 (3):\penalty0
  251--264, August 2005.
\newblock ISSN 1570-0755.
\newblock \doi{10.1007/s11128-005-7654-8}.
\newblock URL \url{http://dx.doi.org/10.1007/s11128-005-7654-8}.

\bibitem[Reichardt(2009{\natexlab{a}})]{Reichardt2009ErrorDetectionBased}
Ben~W. Reichardt.
\newblock {Error-Detection-Based Quantum Fault-Tolerance Threshold}.
\newblock \emph{Algorithmica}, 55\penalty0 (3):\penalty0 517--556, November
  2009{\natexlab{a}}.
\newblock ISSN 0178-4617.
\newblock \doi{10.1007/s00453-007-9069-7}.
\newblock URL \url{http://dx.doi.org/10.1007/s00453-007-9069-7}.

\bibitem[Reichardt(2009{\natexlab{b}})]{Reichardt2009Quantum}
Ben~W. Reichardt.
\newblock {Quantum universality by state distillation}.
\newblock \emph{Quantum Information \& Computation}, 9:\penalty0 1030--1052,
  July 2009{\natexlab{b}}.
\newblock URL \url{http://arxiv.org/abs/quant-ph/0608085}.

\bibitem[Schack and Caves(1999)]{Schack1999Classical}
R\"{u}diger Schack and Carlton~M. Caves.
\newblock {Classical model for bulk-ensemble NMR quantum computation}.
\newblock \emph{Physical Review A}, 60\penalty0 (6):\penalty0 4354--4362,
  December 1999.
\newblock \doi{10.1103/PhysRevA.60.4354}.
\newblock URL \url{http://dx.doi.org/10.1103/PhysRevA.60.4354}.

\bibitem[Spekkens(2008)]{Spekkens08}
R.W. Spekkens.
\newblock \emph{Phys. Rev. Lett.}, 101:\penalty0 020401, 2008.

\bibitem[van Dam and Howard(2009)]{vanDam_stab_noise_threshold}
Wim van Dam and Mark Howard.
\newblock Tight noise thresholds for quantum computation with perfect
  stabilizer operations.
\newblock \emph{Phys. Rev. Lett.}, 103:\penalty0 170504, Oct 2009.
\newblock \doi{10.1103/PhysRevLett.103.170504}.
\newblock URL \url{http://link.aps.org/doi/10.1103/PhysRevLett.103.170504}.

\bibitem[van Dam and Howard(2011)]{vanDam2010Noise_for_higher_dim_sys}
Wim van Dam and Mark Howard.
\newblock Noise thresholds for higher-dimensional systems using the discrete
  wigner function.
\newblock \emph{Phys. Rev. A}, 83:\penalty0 032310, Mar 2011.
\newblock \doi{10.1103/PhysRevA.83.032310}.
\newblock URL \url{http://link.aps.org/doi/10.1103/PhysRevA.83.032310}.

\bibitem[Vidal(2003{\natexlab{a}})]{Vidal2003Efficient}
Guifr\'{e} Vidal.
\newblock {Efficient Classical Simulation of Slightly Entangled Quantum
  Computations}.
\newblock \emph{Physical Review Letters}, 91:\penalty0 147902+, October
  2003{\natexlab{a}}.
\newblock \doi{10.1103/PhysRevLett.91.147902}.
\newblock URL \url{http://dx.doi.org/10.1103/PhysRevLett.91.147902}.

\bibitem[Vidal(2003{\natexlab{b}})]{VidalEntang}
Guifr\'e Vidal.
\newblock Efficient classical simulation of slightly entangled quantum
  computations.
\newblock \emph{Phys. Rev. Lett.}, 91:\penalty0 147902, Oct 2003{\natexlab{b}}.
\newblock \doi{10.1103/PhysRevLett.91.147902}.
\newblock URL \url{http://link.aps.org/doi/10.1103/PhysRevLett.91.147902}.

\bibitem[Wallman and Bartlett(2012)]{Wallman2012Nonnegative}
Joel Wallman and Stephen Bartlett.
\newblock {Non-negative subtheories and quasiprobability representations of
  qubits}.
\newblock \emph{Physical Review A}, 85\penalty0 (6), June 2012.
\newblock ISSN 1094-1622.
\newblock \doi{10.1103/PhysRevA.85.062121}.
\newblock URL \url{http://dx.doi.org/10.1103/PhysRevA.85.062121}.

\bibitem[Weedbrook et~al.(2011)Weedbrook, Pirandola, Garcia-Patron, Cerf,
  Ralph, Shapiro, and Lloyd]{Weedbrook2011Gaussian}
Christian Weedbrook, Stefano Pirandola, Raul Garcia-Patron, Nicolas~J. Cerf,
  Timothy~C. Ralph, Jeffrey~H. Shapiro, and Seth Lloyd.
\newblock {Gaussian Quantum Information}.
\newblock October 2011.
\newblock URL \url{http://arxiv.org/abs/1110.3234}.

\bibitem[Zeng et~al.(2007)Zeng, Cross, and Chuang]{Zeng2007Transversality}
Bei Zeng, Andrew Cross, and Isaac~L. Chuang.
\newblock {Transversality versus Universality for Additive Quantum Codes},
  September 2007.
\newblock URL \url{http://arxiv.org/abs/arXiv:0706.1382}.

\bibitem[Ziegler(1995)]{Ziegler1995Lectures}
G\"{u}nter~M. Ziegler.
\newblock \emph{{Lectures on Polytopes}}.
\newblock Springer, 1995.
\newblock ISBN 038794365X, 9780387943657.
\newblock URL
  \url{http://books.google.ca/books/about/Lectures\_on\_Polytopes.html?id=xd25TXSSUcgC\&\#38;redir\_esc=y}.

\end{thebibliography}

\end{document}